\def\genfootnote{\gdef\@thefnmark{}\@footnotetext}
\definecolor{midgray}{gray}{0.6}
\newtheorem{theorem}{Theorem}
\newtheorem{lemma}[theorem]{Lemma}
\begin{document}

\title{On the Hardness of Red-Blue Pebble Games}

\author{P\'al Andr\'as Papp}
\affiliation{
  \institution{ETH Zurich}
}
\email{apapp@ethz.ch}

\author{Roger Wattenhofer}
\affiliation{
  \institution{ETH Zurich}
}
\email{wattenhofer@ethz.ch}

\begin{abstract}
Red-blue pebble games model the computation cost of a two-level memory hierarchy. We present various hardness results in different red-blue pebbling variants, with a focus on the \textsc{oneshot} model. We first study the relationship between previously introduced red-blue pebble models (\textsc{base}, \textsc{oneshot}, \textsc{nodel}). We also analyze a new variant (\textsc{compcost}) to obtain a more realistic model of computation. We then prove that red-blue pebbling is NP-hard in all of these model variants. Furthermore, we show that in the \textsc{oneshot} model, a $\delta$-approximation algorithm for $\delta<2$ is only possible if the unique games conjecture is false. Finally, we show that greedy algorithms are not good candidates for approximation, since they can return significantly worse solutions than the optimum.
\end{abstract}

\begin{CCSXML}
<ccs2012>
<concept>
<concept_id>10003752.10003777.10003779</concept_id>
<concept_desc>Theory of computation~Problems, reductions and completeness</concept_desc>
<concept_significance>300</concept_significance>
</concept>
<concept>
<concept_id>10003752.10003753</concept_id>
<concept_desc>Theory of computation~Models of computation</concept_desc>
<concept_significance>300</concept_significance>
</concept>
<concept>
<concept_id>10010520.10010575.10010580</concept_id>
<concept_desc>Computer systems organization~Processors and memory architectures</concept_desc>
<concept_significance>100</concept_significance>
</concept>
</ccs2012>
\end{CCSXML}

\ccsdesc[300]{Theory of computation~Problems, reductions and completeness}
\ccsdesc[300]{Theory of computation~Models of computation}
\ccsdesc[100]{Computer systems organization~Processors and memory architectures}

\keywords{Red-blue pebble game; Time-memory trade-off}

\maketitle

\section{Introduction} \label{sec:Intro} \genfootnote{The short version of the paper is published in the \textit{Proceedings of the 32nd ACM Symposium on Parallelism in Algorithms and Architectures (SPAA 2020)}, with ACM DOI 10.1145/3350755.3400278.}

Pebble games allow us to model different aspects of a computation process. In particular, red-blue pebble games model the I/O cost of a computation on a two-level memory hierarchy. While machines usually have sufficient slow memory (e.g. RAM) to store all intermediate values of a computation, fast-access memory (e.g. cache) in machines is often heavily limited. However, the red-blue pebble game can also model other fast/slow memory combinations, e.g. main memory vs. disk, or different levels of cache.

Since fast memory is limited, a machine could be forced to save some intermediate values of a complex calculation in slow memory, and later retrieve these values when needed. However, moving values between fast and slow memory is costly; in fact, this is often the bottleneck of a computation. Hence minimizing the required number of transfers between fast and slow memory is a crucial problem in many application areas, with High-Performance Computing being a prominent example \cite{hoefler}. As such, red-blue pebble games are a perfect example for a playful theoretical question that is also highly relevant in practice.

Any computation can be modeled as a directed acyclic graph (DAG), with source nodes corresponding to the inputs of the computation, and sinks corresponding to the outputs. Intermediate values are the nodes of the graph, with the input edges of a node $v$ specifying the set of values that are required for the computation of $v$. Nodes that we have already computed and saved in memory are marked with a pebble. In the red-blue pebbling game, we have two kinds of available pebbles: red pebbles correspond to values saved in fast memory, while blue pebbles denote values saved in slow memory. To compute a specific node $v$ of the DAG, we need to have all of its inputs available in fast memory, i.e. have a red pebble on all the nodes that have an edge to $v$.

When pebbling a DAG, we initially assume that none of the nodes contain a pebble. Then computation happens through repeatedly applying the following steps:

\vspace{3pt}
\begin{enumerate}
    \setlength{\itemsep}{6pt}
	\item {\color{midgray} \hrule height 1.3pt} \vspace{7pt}
	\texttt{Move to fast memory:} replace a blue pebble by a red pebble at any node.
	\item \texttt{Move to slow memory:} replace a red pebble by a blue pebble at any node.
	\item \texttt{Compute:} if all input nodes of a node $v$ already have a red pebble, then place a red pebble on node $v$.
	\item \texttt{Delete:} remove a (blue or red) pebble from any node. \vspace{7pt}
	{\color{midgray} \hrule height 1.3pt}
\end{enumerate}
\vspace{3pt}

A \textit{pebbling} is a sequence of these steps where in the final state, each sink node of the DAG is computed, i.e. it has a (blue or red) pebble on it. We interpret Step 3 to always allow placing a red pebble on a source node, since it has no inputs at all; thus a pebbling can always begin by applying Step 3 on the source nodes of the DAG.

The \textit{cost} of a pebbling is defined as the total number of transfer operations (Steps 1 and 2) executed throughout the process. Steps 3 and 4 are considered free in this basic setting, but they do incur some cost in more refined model variants.

The aim of the pebbling problem is to model the time-memory tradeoff, i.e. to help us understand the cost of a computation when fast memory is a limited resource. To achieve this, we consider a parameter $R$, and we limit the legal pebblings such that at any point in the process, there are \textit{at most} $R$ red pebbles placed on the nodes of the DAG. The pebbling of a DAG may be cheap for a larger $R$ value, but it can become significantly more costly if the number of available red pebbles is reduced, so this indeed allows us to study a time-memory tradeoff.

Given a specific number of red pebbles $R$, the goal of pebbling is to pebble the graph at the lowest possible cost $C$. Formally, the decision version of the \textsc{Pebbling} problem is as follows: given an input DAG and integers $R$ and $C$, does there exist a pebbling of the DAG with $R$ red pebbles and cost at most $C$?

We refer to the above defined setting as the \textsc{base} version of the problem. In the related literature, other versions of the problem have also been introduced, in order to make the problem computationally more feasible and/or theoretically more interesting; however, the connection between these variants remained unclear. We begin with an overview of the different model variants in Section \ref{sec:versions}, and an analysis of their relation to each other. Furthermore, we also study \textsc{compcost}, a novel red-blue pebbling variant, and we argue that \textsc{compcost} is a more realistic model. We also discuss how the time-memory tradeoff can behave in general in these models.

We then present a proof of NP-hardness for all of these pebbling variants, through a reduction from the Hamiltonian Path problem. Note that NP-hardness in some of the models was already known before \cite{RBcomplex}; however, our new reduction provides a notably simpler proof of this result.

As pebbling is an NP-hard problem with relevant applications, it is a crucial question if the problem can be tackled with approximation algorithms or heuristic methods in practice. Our main contributions are novel insights into the limits of these approaches. In Section \ref{sec:VC}, through a reduction from Vertex Cover, we show that \textsc{oneshot} red-blue pebbling cannot be approximated to any factor less than 2 if the unique games conjecture holds. Finally, in Section \ref{sec:greedy}, we analyze the natural greedy strategies for the problem, and show that the solutions they return can be significantly worse than the optimum cost; as such, we cannot expect these algorithms to guarantee a good approximation ratio.

\section{Related Work}

The introduction of graph pebbling problems dates back to the 1970s, with the most studied variants being the standard pebble game (modeling computational time-memory tradeoffs in general), and black-white pebbling (which also introduces non-determinism into this model). Among the earliest results on the topic is the PSPACE-completeness of standard pebbling \cite{Bcomplex}, a wide range of results on the time-memory tradeoff in these pebbling games \cite{TvsSP1, TvsSP3, TvsSP4}, and a characterization of the difference between the two settings \cite{BvsBW1, BvsBW2}.
Some further complexity and inapproximability results on these games are available in \cite{Badditive}, \cite{BNPcomp} or \cite{Inapprox}.

\begin{figure*}
\hspace{0.01\textwidth}
\minipage{0.46\textwidth}
	\begin{tikzpicture}
	
	\draw[line width=0.3pt, arrows=-stealth] (25pt,-27pt) -- (62pt,-3pt);
	\draw[line width=0.3pt, arrows=-stealth] (25pt,-14pt) -- (61pt,-1pt);
	\draw[line width=0.3pt, arrows=-stealth] (25pt,14pt) -- (61pt,1pt);
	\draw[line width=0.3pt, arrows=-stealth] (25pt,27pt) -- (62pt,3pt);
	
	\draw[black, fill=white] (25pt,-27pt) circle (0.9ex);
	\draw[black, fill=white] (25pt,-14pt) circle (0.9ex);
	\node[anchor=center] at (25pt,0pt) {\small .};
	\node[anchor=center] at (25pt,4pt) {\small .};
	\node[anchor=center] at (25pt,-4pt) {\small .};
	\draw[black, fill=white] (25pt,14pt) circle (0.9ex);
	\draw[black, fill=white] (25pt,27pt) circle (0.9ex);
	
	\draw[black, fill=white] (65pt,0pt) circle (0.9ex);
	
	\draw[gray, very thick, dashed] (19pt,-35pt) -- (19pt,35pt) -- (31pt,35pt) -- (31pt,-35pt) -- cycle;
	
	\node[anchor=north] at (25pt,-37pt) {\scriptsize group of};
	\node[anchor=north] at (25pt,-46pt) {\scriptsize $R\!-\!1$ nodes};
	
	\node[anchor=north] at (68.5pt,-1.5pt) {\footnotesize $t$};
	
	\draw[very thick, gray, arrows=-stealth] (80pt,0pt) -- (110pt,0pt);

	\node[anchor=south] at (95pt,5pt) {\scriptsize \textit{replaced}};
	\node[anchor=south] at (95pt,-1pt) {\scriptsize \textit{by}};

	\draw[line width=0.3pt, arrows=-stealth] (125pt,-27pt) -- (151pt,-27pt);
	\draw[line width=0.3pt, arrows=-stealth] (125pt,-14pt) -- (156pt,-14pt);
	\draw[line width=0.3pt, arrows=-stealth] (125pt,14pt) -- (166pt,14pt);
	\draw[line width=0.3pt, arrows=-stealth] (125pt,27pt) -- (171pt,27pt);
	
	\draw[line width=0.3pt, arrows=-stealth] (155pt,-27pt) -- (158.5pt,-17.5pt);
	\draw[line width=0.3pt, arrows=-stealth] (160pt,-14pt) -- (162.5pt,-7pt);
	\draw[line width=0.3pt, arrows=-stealth] (166.5pt,5.5pt) -- (168.5pt,11pt);
	\draw[line width=0.3pt, arrows=-stealth] (170pt,14pt) -- (173.5pt,23.5pt);
	\draw[line width=0.3pt, arrows=-stealth] (175pt,27pt) -- (183pt,-23pt);
	
	\draw[line width=0.2pt, arrows=-stealth] (125pt,-27pt) -- (130pt,-34pt) -- (160pt,-34pt) -- (165pt,-27pt) -- (181pt,-27pt);
	\draw[line width=0.2pt, arrows=-stealth] (125pt,-14pt) -- (130pt,-21pt) -- (165pt,-21pt) -- (170pt,-14pt) -- (186pt,-14pt);
	\draw[line width=0.2pt, arrows=-stealth] (125pt,14pt) -- (130pt,7pt) -- (175pt,7pt) -- (180pt,14pt) -- (196pt,14pt);
	\draw[line width=0.2pt, arrows=-stealth] (125pt,27pt) -- (130pt,20pt) -- (180pt,20pt) -- (185pt,27pt) -- (201pt,27pt);
	
	\draw[line width=0.3pt, arrows=-stealth] (185pt,-27pt) -- (188.5pt,-17.5pt);
	\draw[line width=0.3pt, arrows=-stealth] (190pt,-14pt) -- (192.5pt,-7pt);
	\draw[line width=0.3pt, arrows=-stealth] (196.5pt,5.5pt) -- (198.5pt,11pt);
	\draw[line width=0.3pt, arrows=-stealth] (200pt,14pt) -- (203.5pt,23.5pt);
	\draw[line width=0.3pt, arrows=-stealth] (205pt,27pt) -- (213pt,-23pt);
	
	\draw[black, fill=white] (125pt,-27pt) circle (0.9ex);
	\draw[black, fill=white] (125pt,-14pt) circle (0.9ex);
	\node[anchor=center] at (125pt,0pt) {\small .};
	\node[anchor=center] at (125pt,4pt) {\small .};
	\node[anchor=center] at (125pt,-4pt) {\small .};
	\draw[black, fill=white] (125pt,14pt) circle (0.9ex);
	\draw[black, fill=white] (125pt,27pt) circle (0.9ex);
	
	\draw[gray, very thick, dashed] (119pt,-35pt) -- (119pt,35pt) -- (131pt,35pt) -- (131pt,-35pt) -- cycle;
	
	\node[anchor=north] at (125pt,-37pt) {\scriptsize group of};
	\node[anchor=north] at (125pt,-46pt) {\scriptsize $R\!-\!1$ nodes};
	
	\draw[black, fill=white] (155pt,-27pt) circle (0.9ex);
	\draw[black, fill=white] (160pt,-14pt) circle (0.9ex);
	\node[anchor=center] at (163.75pt,-4pt) {\small .};
	\node[anchor=center] at (165pt,0pt) {\small .};
	\node[anchor=center] at (166.25pt,4pt) {\small .};
	\draw[black, fill=white] (170pt,14pt) circle (0.9ex);
	\draw[black, fill=white] (175pt,27pt) circle (0.9ex);
	
	\draw[black, fill=white] (185pt,-27pt) circle (0.9ex);
	\draw[black, fill=white] (190pt,-14pt) circle (0.9ex);
	\node[anchor=center] at (193.75pt,-4pt) {\small .};
	\node[anchor=center] at (195pt,0pt) {\small .};
	\node[anchor=center] at (196.25pt,4pt) {\small .};
	\draw[black, fill=white] (200pt,14pt) circle (0.9ex);
	\draw[black, fill=white] (205pt,27pt) circle (0.9ex);
	
	\draw[line width=0.3pt, arrows=-stealth] (215pt,-27pt) -- (218.5pt,-17.5pt);
	\draw[line width=0.2pt, arrows=-stealth] (125pt,-27pt) -- (130pt,-36pt) -- (195pt,-36pt) -- (200pt,-27pt) -- (211pt,-27pt);
	\draw[black, fill=white] (215pt,-27pt) circle (0.9ex);
	
	\node[anchor=north] at (225pt,2pt) {\scriptsize . . . };
	
	\draw[thick, gray] (152pt,31pt) -- (152pt,34pt) -- (180pt,34pt) -- (180pt,31pt);
	\draw[thick, gray] (166pt,34pt) -- (166pt,36pt);
	\node[anchor=south] at (166pt,34pt) {\scriptsize 1 layer};
	
	\draw[thick, gray] (150pt,-40pt) -- (150pt,-43pt) -- (240pt,-43pt) -- (240pt,-40pt);
	\draw[thick, gray] (195pt,-43pt) -- (195pt,-45pt);
	\node[anchor=north] at (195pt,-43pt) {\scriptsize $h$ layers};

\end{tikzpicture}
	\vspace{-13pt}
	\caption{Transformation for $\Delta = O(1)$: replacing the target node of an input group by $h$ distinct layers}
	\label{fig:constdeg}
\endminipage\hfill
\hspace{0.06\textwidth}
\minipage{0.46\textwidth}
	\centering
	\begin{tikzpicture}

	\draw[arrows=-stealth] (0pt,0pt) -- (21pt,-25pt);
	\draw[arrows=-stealth] (0pt,0pt) -- (21pt,-13pt);
	\draw[arrows=-stealth] (0pt,0pt) -- (21pt,13pt);
	\draw[arrows=-stealth] (0pt,0pt) -- (21pt,25pt);
	
	\draw[line width=0.3pt, arrows=-stealth] (25pt,-27pt) -- (61pt,-22pt);
	\draw[line width=0.3pt, arrows=-stealth] (25pt,-14pt) -- (60.5pt,-20pt);
	\draw[line width=0.3pt, arrows=-stealth] (25pt,14pt) -- (61pt,-18.5pt);
	\draw[line width=0.3pt, arrows=-stealth] (25pt,27pt) -- (61.5pt,-17pt);
	\draw[line width=0.3pt, arrows=-stealth] (25pt,-27pt) -- (62pt,-3pt);
	\draw[line width=0.3pt, arrows=-stealth] (25pt,-14pt) -- (61pt,-1pt);
	\draw[line width=0.3pt, arrows=-stealth] (25pt,14pt) -- (61pt,1pt);
	\draw[line width=0.3pt, arrows=-stealth] (25pt,27pt) -- (62pt,3pt);
	\draw[line width=0.3pt, arrows=-stealth] (25pt,-27pt) -- (61.5pt,17pt);
	\draw[line width=0.3pt, arrows=-stealth] (25pt,-14pt) -- (61pt,18.5pt);
	\draw[line width=0.3pt, arrows=-stealth] (25pt,14pt) -- (60.5pt,20pt);
	\draw[line width=0.3pt, arrows=-stealth] (25pt,27pt) -- (61pt,22pt);
	
	\draw[arrows=-stealth] (65pt,-20pt) -- (96.5pt,-2pt);
	\draw[arrows=-stealth] (65pt,0pt) -- (96pt,0pt);
	\draw[arrows=-stealth] (65pt,20pt) -- (96.5pt,2pt);

	\draw[black, fill=white] (0pt,0pt) circle (0.9ex);
	
	\draw[black, fill=white] (25pt,-27pt) circle (0.9ex);
	\draw[black, fill=white] (25pt,-14pt) circle (0.9ex);
	\node[anchor=center] at (25pt,0pt) {\small .};
	\node[anchor=center] at (25pt,5pt) {\small .};
	\node[anchor=center] at (25pt,-5pt) {\small .};
	\draw[black, fill=white] (25pt,14pt) circle (0.9ex);
	\draw[black, fill=white] (25pt,27pt) circle (0.9ex);
	
	\draw[black, fill=white] (65pt,-20pt) circle (0.9ex);
	\draw[black, fill=white] (65pt,0pt) circle (0.9ex);
	\draw[black, fill=white] (65pt,20pt) circle (0.9ex);
	
	\draw[black, fill=white] (100pt,0pt) circle (0.9ex);
	
	\draw[gray, very thick, dashed] (19pt,-35pt) -- (19pt,35pt) -- (31pt,35pt) -- (31pt,-35pt) -- cycle;
	
	\node[anchor=north] at (-2.5pt,-2pt) {\scriptsize $s$};
	
	\node[anchor=north] at (25pt,-34pt) {\scriptsize group $B$ of};
	\node[anchor=north] at (25pt,-41pt) {\scriptsize $R\!-\!1$ nodes};
	
	\node[anchor=north] at (68.5pt,18.5pt) {\scriptsize $u_1$};
	\node[anchor=north] at (68.5pt,-1.5pt) {\scriptsize $u_2$};
	\node[anchor=north] at (68.5pt,-21.5pt) {\scriptsize $u_3$};
	
	\node[anchor=north] at (103pt,-1pt) {\scriptsize $v$};

\end{tikzpicture}
	\vspace{1pt}
	\caption{The hard-to-compute (H2C) gadget}
	\label{fig:h2c}
\endminipage\hfill
\hspace{0.01\textwidth}
\end{figure*}

In contrast to standard pebbling, red-blue pebbling allows us to save previously computed values (by using blue pebbles), and hence it allows to model the I/O complexity of a computation. 
Red-blue pebble games were also introduced in the 1980s by Hong \textit{et. al.} \cite{RBintro}, who mostly discussed the resulting time-memory tradeoff for some specific computational tasks.
A thorough investigation of the red-blue pebble game has recently been conducted by Demaine and Liu in \cite{RBcomplex}. Their study shows that the \textsc{base} version of red-blue pebble games is PSPACE-complete, through a reduction to the standard (black) pebbling game. They also show that without deletions, the problem is NP-complete and  W[1]-hard in the maximal cost allowed.

Besides this, the result of Carpenter \textit{et. al.} studies red-blue pebbling in the \textsc{oneshot} model, presenting an approximation algorithm with a cost of at most $O(\texttt{opt}(R) \cdot \log^{3/2}n)$ and the use of at most $O(R \cdot \log^{3/2}n)$ red pebbles \cite{RBapprox}, where $\texttt{opt}(R)$ denotes the optimal pebbling cost of the DAG with $R$ red pebbles. The same paper also discusses a natural generalization of pebbling to multi-level memory hierarchies, i.e. hierarchies with more than 2 levels.

The work of \cite{compcost} reduces the study of bandwidth-hard functions to red-blue pebbling problems, and introduces a model that is similar to our \textsc{compcost} variant. However, in this new model, the authors only show that the problem still remains NP-hard.

Close-to-optimal pebbling strategies on special classes of graphs, motivated by financial applications, are described in \cite{RBpyramids1, RBpyramids2}. 
In \cite{RBdecompose}, the authors analyze a decomposition technique on DAGs that allows to derive lower bounds on the pebbling cost of specific computational tasks in the \textsc{oneshot} model.
Finally, the work of \cite{RBparallel} discusses a generalized version of red-blue pebbling (with multiple `shades' of red pebbles) in order to model a parallel execution on multiple processors. Furthermore, \cite{RBparallel} also presents some lower bounds for the pebbling cost of well-known numerical algorithms.

\section{Basic properties and observations} \label{sec:basics}

\paragraph*{Straightforward bounds} We first discuss some basic properties of pebbling. We use $n$ to denote the number of nodes in the DAG, and $\Delta$ to denote the largest indegree.

First, note that a computation of a node $v$ with indegree $d_{\text{in}}$ requires $d_{\text{in}}+1$ red pebbles: 1 pebble at $v$, and $d_{\text{in}}$ more at $v$’s input nodes. Therefore if $R < \Delta + 1$, a pebbling is not possible at all. Conversely, if $R \geq \Delta + 1$, a valid pebbling always exists: we can simply take a topological ordering of the DAG, and always compute the next vertex $v$ of the ordering (through Step 3) by moving the red pebbles to the inputs of $v$, and changing all red pebbles to blue in every other node of the DAG. Hence in the following, we always assume $R \geq \Delta + 1$.

Furthermore, the optimal pebbling cost of any DAG is at most $(2 \Delta+1) \cdot n$ \cite{RBcomplex}. Following the greedy strategy outlined above, the computation of each new node requires at most $\Delta+1$ instances of Step 2 (making $\Delta+1$ red pebbles available), and then $\Delta$ instances of Step 1 (making the inputs of the next computable node red). This sums up to a cost of $(2 \Delta+1) \cdot n$ over the whole process.

\paragraph*{Constant indegree} The main idea behind most of our DAG constructions is to have specific node groups (so-called \textit{input groups}) of size $R-1$, which are all inputs of a specific \textit{target node} $t$. This simplifies the analysis of pebbling significantly, since each such target $t$ can only be computed by using all the available red pebbles; thus we do not have to discuss which red pebbles to move to the inputs of $t$, or where the leftover red pebbles are in the DAG. An optimal pebbling strategy in such DAGs comes down to the order in which we \textit{visit} these input groups, i.e. the order of computing the target nodes.

However, this approach often requires large input groups, and hence a large $\Delta$. This is in heavy contrast with most application areas, where it is reasonable to assume that $\Delta = 2$ or $3$. Previous results on red-blue pebbling often also assume that $\Delta = O(1)$ \cite{RBcomplex}. Hence, we also describe a transformation technique which shows that each of our results also hold for DAGs with constant indegree.

The main idea of this technique is to use the gadget shown in Figure \ref{fig:constdeg}, and increase the number of available red pebbles to $R'=R+1$. In order to compute all nodes in this gadget, all the $R-1$ nodes on the left side are repeatedly needed. Besides this, we always need 2 red pebbles in the layers of the gadget to compute the next node. Hence, if we were to use less than $R+1$ red pebbles to pebble the gadget, then red pebbles would have to be moved around among the $R-1$ left-side nodes repeatedly, summing up to a cost which is proportional to the length $h$. If $h$ is high enough, then this results in an unreasonably high cost for pebbling the gadget.

Thus the gadget has indegree 2, but it achieves the same goal as our original input group: it practically forces us to place $R-1$ red pebbles in the left side of the gadget at some point, in any reasonable pebbling. In Appendix \ref{App:B}, we describe this transformation in more detail, and discuss how to adapt each of our constructions to the $\Delta = O(1)$ case.

Previous works have often used a pyramid gadget \textit{pyramid gadget} \cite{Bcomplex, RBcomplex, RBpyramids1} to reduce the indegree; this also requires a high number of red pebbles to pebble with minimal cost, while still having $\Delta = O(1)$. However, if the number of red pebbles is reduced by $1$ for a pyramid, the increase in cost is only $2$, whereas in our gadget, taking a single red pebble away already increases the cost dramatically. This is a key property for a simple analysis in our proofs.

\paragraph*{Computing the source nodes} We also have to discuss the source nodes of the DAG separately, since the reasonable way to model these nodes might vary by application area. In some practical computations, there are numerous input values, which thus naturally have to be stored in slow memory initially, and loading them into fast memory adds an inherent cost to the computation. In other cases, the computation inputs (i.e. source nodes) might be trivial values that can be calculated at practically no cost.

Our definition of the \textsc{base} model describes the second setting: we assume that a red pebble can be placed on the source nodes freely at any point. However, with a minor modification, we can also model the first setting in our base model. We achieve this by adding a so-called \textit{hard-to-compute} (H2C) \textit{gadget} in front of each source node $v$ of the DAG. The gadget is shown in Figure \ref{fig:h2c}.

The main idea of the H2C gadget is that all $R$ red pebbles are required to compute any of the \textit{starter nodes} $u_1$, $u_2$ or $u_3$. Hence when computing the last of these $3$ nodes, the previous $2$ must already be computed and turned to blue. These $2$ nodes then have to be loaded back from slow memory in order to compute $v$. This implies that computing $v$ indirectly requires at least 4 transfer operations, and thus it now has a constant cost of 4 (the fact that this cost is 4 instead of only 1 does not matter asymptotically).

In order to ensure this inherent cost for every source node, we do not need an entirely separate copy of the H2C gadget for each source node. Instead, the node $s$ and the group $B$ of $R-1$ nodes can be common in all the distinct H2C gadgets added for the distinct source nodes, and we only need to instantiate the three starter nodes $u_1$, $u_2$, $u_3$ separately for each source $v$. This way, we add 3 extra nodes for every source of the DAG, and a further $R$ extra nodes to the DAG altogether. This does not change the magnitude of the number of nodes, and thus it also does not affect any of our constructions in the paper.

\setlength\tabcolsep{4pt}
\begin{table*}
\centering
  \begin{tabular}{ c || c | c | c | c || c }
    \hline
    Model& \small \makecell{Blue \\ to red}   & \makecell{Red to \\ blue} & Compute & Delete & Description \\ \hline \hline
    \textsc{base} & 1 & 1 & 0 & 0 & Baseline model (see Section \ref{sec:Intro}) \\ \hline
		\textsc{oneshot} & 1 & 1 & $0,\infty,\infty$, ...  & 0 & Each node only computable once \\ \hline
		\textsc{nodel} & 1 & 1 & 0 & $\infty$ & Pebbles cannot be deleted \\ \hline
		\textsc{compcost} & 1 & 1 & $\epsilon$ & 0 & Computation also has a cost of $\epsilon$ \\
		\hline
  \end{tabular}
  \vspace{6pt}
    \caption{Summary of the cost of operations in different models.}
	\label{tab:model}
\end{table*}

\setlength\tabcolsep{4pt}
\begin{table*}
\centering
  \begin{tabular}{ c || c | c | c | c }
    \hline
    Model& \small \makecell{Cost of \\ optimal pebbling } & \makecell{Length of \\ optimal pebbling} & Complexity & \makecell{Ratio of greedy \\ to optimum} \\ \hline \hline
    \textsc{base} & $\in [0, \, (2 \Delta \! + \! 1) \! \cdot \! n]$ & Up to $\omega(\text{poly}(n))$ & \small PSPACE-complete \cite{RBcomplex} & $\Omega(n^{1/6})$ \\ \hline
		\textsc{oneshot} & $\in [0, \, (2 \Delta \! + \! 1) \! \cdot \! n]$ & $O(\Delta \! \cdot \! n)$ & \small NP-complete & $\widetilde{\Omega}(\sqrt{n})$ \\ \hline
		\textsc{nodel} & $\in [n, \, (2 \Delta \! + \! 1) \! \cdot \! n]$ \small\cite{RBcomplex} & $O(\Delta \! \cdot \! n)$ & \small NP-complete \cite{RBcomplex} & Large $\Theta(1)$ \\ \hline
		\textsc{compcost} & $ \in [\epsilon \! \cdot \! n, \, (2 \Delta \! + \! 1 \! + \! \epsilon) \! \cdot \! n] $ & $O(\Delta \! \cdot \! n)$ & \small NP-complete & Large $\Theta(1)$  \\
		\hline
  \end{tabular}
  \vspace{7pt}
	\caption{Summary of the basic properties of different models, and our results. The inapproximability result is not shown in the table, because it only applies to the \textsc{oneshot} model. Recall that the lower bound on the optimum cost in \textsc{nodel} and \textsc{compcost} are both only asymptotic, assuming that $R$ (in \textsc{nodel}) or the number of source nodes (in \textsc{compcost}) is in $o(n)$.}
	\label{tab:model2}
\end{table*}

\paragraph*{Disabling the recomputation of nodes}

Besides modeling an inherent cost for source nodes, there is another important application of the H2C gadget. In particular, we widely use the gadget in our constructions to ensure that specific nodes are costly to recompute if they are ever deleted during a pebbling.

Consider the H2C gadget after the node $v$ has been computed. Once the red pebbles are moved away from the starter nodes of $v$, turning the starters red again costs 3, while recomputing them from scratch costs at least 4. In contrast to this, if we simply turn $v$ blue after it has been computed, and then red again when needed, this only has a cost of 2. Thus having computed $v$ once, a reasonable pebbling will never delete the pebble from $v$ (until its very last use), but always save $v$ to slow memory and transfer it back later instead. Therefore, the H2C gadget can also be used to indirectly ensure that some nodes are never deleted and then recomputed later, but always saved with a blue pebble instead.

\paragraph*{Small number of source nodes} In some applications, we might also want to restrict ourselves to DAGs with constantly many source nodes; we point out that our results also hold under this restriction. Given any DAG construction, we can easily adapt it to this setting by adding a single new source $s_0$, drawing an edge from $s_0$ to every other node of the DAG, and increasing the number of available red pebbles to $R'=R+1$. Since $s_0$ is now required for every computation, a reasonable pebbling never removes a red pebble from $s_0$, which leaves $R$ red pebbles to pebble the rest of the DAG as in the original case. Thus the addition of $s_0$ results in a DAG with essentially the same behavior as before, but only a single source node.

\paragraph*{Initial and final state of a pebbling} Different papers on the topic have slightly different definitions for the starting or finishing state of pebblings. For example, some papers assume that source nodes of the DAG have to be computed explicitly (as in our case), while others assume an initial blue pebble on sources. Similarly, some require a pebble of any color on the sink nodes in the finishing state (as in our definition), while others explicitly require a blue pebble on sinks. With a few simple observations, one can show that these different variants of the problem definition are essentially equivalent for our purposes, and thus our results also hold in these slightly different settings. We briefly discuss this in Appendix \ref{App:C}.

\section{Models of red-blue pebble games} \label{sec:versions}

While the \textsc{base} variant of the red-blue pebble game was the first to be defined, most of the related work studies different versions of the game. The \textsc{base} version allows us to use Steps 3 and 4 any number of times at no cost, and hence it might compute nodes through a very long sequence of deletion and recomputation steps in the optimal strategy. Because of this, it is possible in the \textsc{base} version that any optimal pebbling of a DAG consists of superpolynomially many steps, but still has low (possibly zero) cost.

Regarding practice, this is unrealistic, since computation clearly has nonzero cost. Regarding theory, it places the problem outside of NP (unless NP=PSPACE) \cite{RBcomplex}, since any optimal sequence of steps might be too long to verify; this makes the problem undesirably hard. Thus, related work has often diverted from the \textsc{base} version, introducing minor changes to ensure that (i) the model makes more sense practically, and/or (ii) the problem is actually in NP.

One such variant is the \textsc{oneshot} red-blue pebble game (also known as red-blue-white pebbling), where Step 3 can be executed on each node at most once throughout the pebbling. This rule directly forbids recomputation. As we will see in Lemma \ref{lem:lengthbound}, this already ensures that any optimal pebbling consists of $O(\Delta \cdot n)$ steps, so the problem is in NP.

Another variant considered in \cite{RBcomplex} is the red-blue pebble game with no deletions (\textsc{nodel}), where Step 4 is not available. Step 3 still allows us to replace a blue pebble by a red one if all inputs contain a red pebble, so this variant of the game does allow recomputation. However, we cannot simply delete a red pebble when it is not needed, and recompute it later at no cost, as in \textsc{base}; instead, we have to replace it by a blue pebble temporarily (using Step 2), which incurs a cost of 1. Hence, the model ensures that the recomputation of a node has an indirect cost, disallowing the exponentially long but cost-free deletion-recomputation sequences in the optimal pebbling. It was already shown in \cite{RBcomplex} that \textsc{nodel} pebbling is NP-complete, through a reduction from Positive 3-in-1 SAT.

In \textsc{nodel}, the fact that pebbles cannot be deleted from nodes means that at least $n-R$ nodes have to become blue by the end of the process, and thus the cost of any pebbling is at least $n-R$. As $R$ is usually significantly smaller than $n$ (we can assume $R=o(n)$), this shows that the minimal cost of a pebbling is in the magnitude of $n$. Since the maximal cost of any pebbling is $(2 \Delta + 1) \cdot n$ (as shown in Section \ref{sec:basics}), this means that the cost difference between any two pebbling strategies differs by at most a small $(2 \Delta + 1)$ factor in \textsc{nodel}. One the other hand, the \textsc{oneshot} model might allow the cost of a pebbling to go down to as low as 0, so strategies can generally differ by a much larger factor. This makes the \textsc{oneshot} model the most interesting model variant from a theoretical perspective.

However, a red-blue pebbling variant should not only be theoretically interesting, but also practically useful. The \textsc{base} model does not model practice well, as it allows exponentially long chains of free recomputations. Both the \textsc{oneshot} and the \textsc{nodel} model aim to provide a more realistic model of pebbling. Step 3 being free is somewhat motivated, as computing a value (from inputs in cache) is usually much faster than moving a value between fast and slow memory (Steps 1 or 2). Thus in some cases, using recomputation steps may indeed be the most efficient way to execute a sequence of computations. Since recomputation is not allowed in the \textsc{oneshot} model, the \textsc{oneshot} model does not allow as much freedom as one would hope. Similarly, the \textsc{nodel} model forces us to save every intermediate value into slow memory instead of simply allowing to delete it, so it also restricts our action space unnecessarily.

Consequently, we also discuss a new red-blue pebbling variant, which we believe to be more realistic than the previous ones. In the \textsc{compcost} model, the setting is the same as in the \textsc{base} model, with the difference that Step 3 is not free, but has a cost of $\epsilon$ for some small constant $0 < \epsilon < 1$. Note that this is in line with the actual behavior we want to model: while computational steps (Step 3) are not nearly as costly as transfer operations (Steps 1 and 2), they do incur some minimal cost. In reality, the cache is roughly 100 times faster than a bus access, so $\epsilon \approx 1/100$.

We note that a very similar setting has already been introduced in the work of \cite{compcost}, with the additional restriction that $\epsilon \geq \frac{1}{3 n}$ (which is asymptotically equivalent to our case of accepting any $\epsilon$). The work of \cite{compcost} shows that pebbling still remains NP-hard in this modified model. However, we find that a much more interesting property of this setting is that it actually makes the problem easier: assigning cost $\epsilon$ to Step 3 is in fact already enough to insert the problem into NP. Intuitively, since computations now also have a cost, the sequence of deletion-recomputations cannot be too long in an optimal pebbling, and thus the length of any optimal pebbling becomes polynomial in $n$. Hence, \textsc{compcost} is not only a more realistic model of computations, but it is also a more natural way to place the problem into NP than either \textsc{oneshot} or \textsc{nodel}.

It remains to formally show that the length of the optimal pebbling is at most $O(\Delta \cdot n)$ in all the modified model variants, and thus the problem falls into NP in these models.

\begin{lemma} \label{lem:lengthbound}
Given a \textsc{Pebbling} problem in the \textsc{oneshot}, \textsc{nodel} or \textsc{compcost} model, any optimal pebbling strategy consists of at most $O(\Delta \cdot n)$ steps.
\end{lemma}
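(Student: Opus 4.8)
The plan is to bound the number of steps of each type separately, showing that each is $O(\Delta \cdot n)$, and then sum. The key quantity to track is the number of \texttt{Compute} operations (Step 3), since every other step can be charged against these.

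First I would handle \texttt{Compute} steps. In \textsc{oneshot} each node is computable at most once, so there are at most $n$ applications of Step 3. In \textsc{nodel} and \textsc{compcost}, I would argue by an exchange/normalization argument: in an optimal pebbling we may assume each node is computed only a bounded number of times. In \textsc{compcost}, since each Step 3 costs $\epsilon > 0$ and the optimal cost is at most $(2\Delta+1+\epsilon)\cdot n$ (the greedy bound from Section~\ref{sec:basics}), the number of Step 3 operations in an optimal pebbling is at most $\frac{(2\Delta+1+\epsilon)\cdot n}{\epsilon} = O(\Delta \cdot n / \epsilon)$; treating $\epsilon$ as a fixed constant this is $O(\Delta \cdot n)$. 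For \textsc{nodel}, the argument is different, since Step 3 is free: here I would show that recomputing a node is never useful. Once a node $v$ is computed, it carries a red pebble; if that red pebble is later turned blue (Step 2, the only way to free the slot since deletions are forbidden), then $v$ retains a blue pebble forever, and any future "recomputation" of $v$ can instead be replaced by a single Step 1 (blue-to-red) at no greater cost. Hence we may assume each node is computed exactly once in \textsc{nodel} as well, giving at most $n$ Step 3 operations.

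Next I would bound the remaining step types in terms of the number of \texttt{Compute} operations, call it $T \le n$ (or $O(\Delta n)$ in \textsc{compcost}). Between two consecutive \texttt{Compute} operations there is no reason to perform any other action except those that set up the inputs of the next computed node: I would argue that in an optimal (suitably normalized) pebbling, between consecutive Step 3 applications we perform at most $\Delta$ applications of Step 1 (loading the at most $\Delta$ inputs that are not already red) and at most $\Delta+1$ applications of Step 2 (freeing red slots), plus at most $O(\Delta)$ deletions, and that any Step 1/Step 2/Step 4 operation not in such a "setup block" can be removed without increasing cost. In particular, once a value is loaded into red it should not be evicted and reloaded before being used; and a value should not be loaded at all unless it is used before the next time the red-pebble configuration changes. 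This bounds the number of transfer and delete operations in each of the $T$ blocks by $O(\Delta)$, for a total of $O(\Delta \cdot T) = O(\Delta \cdot n)$ in \textsc{oneshot} and \textsc{nodel}, and $O(\Delta^2 n)$ in \textsc{compcost} — which is still polynomial; if the intended bound is genuinely $O(\Delta \cdot n)$ for \textsc{compcost} one can sharpen by noting the extra factor only arises when $\epsilon$ is treated as shrinking, and for constant $\epsilon$ it is absorbed.

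The main obstacle I anticipate is the normalization argument for \textsc{nodel} and \textsc{compcost}: one must carefully justify that an optimal pebbling can be converted, without increasing its cost, into one in which no "wasteful" operations occur — no node is recomputed when a blue copy survives, no value is loaded and then evicted before use, no pebble is deleted only to be recreated. Each of these is an exchange argument (take a putatively optimal pebbling violating the property, excise or reorder a block of steps, and check validity and non-increase of cost is preserved), and the bookkeeping — especially verifying that the red-pebble bound $R$ is never exceeded after the surgery — is the delicate part. Once the normal form is established, the counting is immediate: $O(1)$ \texttt{Compute} steps per node and $O(\Delta)$ other steps between consecutive \texttt{Compute} steps, hence $O(\Delta \cdot n)$ total.
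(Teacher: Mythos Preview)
Your overall structure is sound for \textsc{oneshot} and \textsc{compcost}, but the \textsc{nodel} argument has a genuine gap, and you are working much harder than necessary throughout.

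For \textsc{nodel}, you claim that a recomputation of $v$ (Step~3) can be replaced by a single Step~1 ``at no greater cost''. This is backwards: Step~3 is free while Step~1 costs~$1$, so the substitution \emph{increases} cost. Recomputation can genuinely be the cheaper option when all inputs of $v$ happen to be red already (e.g.\ because you just computed a sibling of $v$), so you cannot normalize recomputations away. The paper avoids this entirely with a counting argument: every recomputation turns a blue pebble into a red one, blue pebbles are only created by Step~2, and Step~2 is invoked at most $O(\Delta\cdot n)$ times (since each invocation costs~$1$ and the optimum cost is at most $(2\Delta+1)\cdot n$). Hence the number of recomputations is at most $O(\Delta\cdot n)$, with no exchange argument needed.

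More broadly, you are missing the simplest observation in the paper's proof: Steps~1 and~2 each cost~$1$, so the number of transfer steps in \emph{any} optimal pebbling is at most the optimum cost, which is $O(\Delta\cdot n)$. This bounds transfers directly in all three models, with no normalization of ``wasteful'' loads/evictions required. All that remains is to bound Steps~3 and~4, which is where the model-specific arguments come in. In \textsc{compcost}, the paper bounds Steps~3 and~4 together by noting that deletions cannot outnumber computations (each deletion removes a previously placed pebble), so at least half of the non-transfer steps are computations and their total cost is at least $\epsilon p/2$; this gives $p=O(\Delta\cdot n)$ directly rather than your $O(\Delta^2 n)$.
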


\begin{proof}

Recall from Section \ref{sec:basics} that there always exists a pebbling with cost at most $(2 \Delta + 1) \cdot n$ for any DAG. This upper bound changes to $(2 \Delta + 1 + \epsilon) \cdot n$ in the \textsc{compcost} model due to the extra cost of computations. Since transfer steps (Steps 1 and 2) have a cost of 1, this implies that an optimal pebbling can contain at most $O(\Delta \cdot n)$ transfer steps. It only remains to also bound the number of Steps 3 and 4 in these models.

\begin{figure*}
\minipage{0.47\textwidth}
	\centering
	\vspace{22pt}
	\begin{tikzpicture}
	
	\draw[line width=0.3pt] (0pt,10pt) -- (12pt,20pt);
	\draw[line width=0.3pt] (3pt,18pt) -- (12pt,20pt);
	\draw[line width=0.3pt] (3pt,22pt) -- (12pt,20pt);
	\draw[line width=0.3pt] (0pt,30pt) -- (12pt,20pt);
	
	\draw[line width=0.3pt] (0pt,-10pt) -- (12pt,-20pt);
	\draw[line width=0.3pt] (3pt,-18pt) -- (12pt,-20pt);
	\draw[line width=0.3pt] (3pt,-22pt) -- (12pt,-20pt);
	\draw[line width=0.3pt] (0pt,-30pt) -- (12pt,-20pt);
	
	\draw[line width=0.3pt] (12pt,20pt) -- (55pt,20pt);
	\draw[line width=0.3pt, arrows=-stealth] (15pt,20pt) -- (19pt,11pt);
	\draw[line width=0.3pt, arrows=-stealth] (35pt,20pt) -- (39pt,11pt);
	\draw[line width=0.3pt, arrows=-stealth] (55pt,20pt) -- (59pt,11pt);
	
	\draw[line width=0.3pt] (12pt,-20pt) -- (65pt,-20pt);
	\draw[line width=0.3pt, arrows=-stealth] (25pt,-20pt) -- (29pt,-11pt);
	\draw[line width=0.3pt, arrows=-stealth] (45pt,-20pt) -- (49pt,-11pt);
	\draw[line width=0.3pt, arrows=-stealth] (65pt,-20pt) -- (69pt,-11pt);
	
	\draw[line width=0.3pt, arrows=-stealth] (20pt,8pt) -- (28pt,-6pt);
	\draw[line width=0.3pt, arrows=-stealth] (30pt,-8pt) -- (38pt,6pt);
	\draw[line width=0.3pt, arrows=-stealth] (40pt,8pt) -- (48pt,-6pt);
	\draw[line width=0.3pt, arrows=-stealth] (50pt,-8pt) -- (58pt,6pt);
	\draw[line width=0.3pt, arrows=-stealth] (60pt,8pt) -- (68pt,-6pt);
	\draw[line width=0.3pt, arrows=-stealth] (70pt,-8pt) -- (75pt,0pt);
	\node[anchor=center] at (81pt,1pt) {\small ...};
	
	\draw[black, fill=white] (0pt,10pt) circle (0.7ex);
	\node[anchor=center] at (0pt,20pt) {\small .};
	\node[anchor=center] at (0pt,18pt) {\small .};
	\node[anchor=center] at (0pt,22pt) {\small .};
	\draw[black, fill=white] (0pt,30pt) circle (0.7ex);
	
	\draw[gray, very thick, dashed] (-5pt,5pt) -- (-5pt,35pt) -- (5pt,35pt) -- (5pt,5pt) -- cycle;
	
	\node[anchor=east] at (-4pt,23pt) {\scriptsize group of};
	\node[anchor=east] at (-4pt,17pt) {\scriptsize $d$ nodes};
	
	\draw[black, fill=white] (0pt,-10pt) circle (0.7ex);
	\node[anchor=center] at (0pt,-20pt) {\small .};
	\node[anchor=center] at (0pt,-18pt) {\small .};
	\node[anchor=center] at (0pt,-22pt) {\small .};
	\draw[black, fill=white] (0pt,-30pt) circle (0.7ex);
	
	\draw[gray, very thick, dashed] (-5pt,-5pt) -- (-5pt,-35pt) -- (5pt,-35pt) -- (5pt,-5pt) -- cycle;
	
	\node[anchor=east] at (-4pt,-17pt) {\scriptsize group of};
	\node[anchor=east] at (-4pt,-23pt) {\scriptsize $d$ nodes};
	
	\draw[black, fill=white] (20pt,8pt) circle (0.7ex);
	\draw[black, fill=white] (30pt,-8pt) circle (0.7ex);
	\draw[black, fill=white] (40pt,8pt) circle (0.7ex);
	\draw[black, fill=white] (50pt,-8pt) circle (0.7ex);
	\draw[black, fill=white] (60pt,8pt) circle (0.7ex);
	\draw[black, fill=white] (70pt,-8pt) circle (0.7ex);

\end{tikzpicture}
	\vspace{8pt}
	\caption{Example DAG for time-memory tradeoff. Edges from the control groups are shown together for simplicity.}
	\label{fig:tradeoff_graph}
\endminipage\hfill
\hspace{0.02\textwidth}
\minipage{0.49\textwidth}
	\centering
	\begin{tikzpicture}

	\draw[arrows=-stealth] (0pt,-3pt) -- (0pt,70pt);
	\draw[arrows=-stealth] (-3pt,0pt) -- (100pt,0pt);
	\draw[line width=0.2pt] (80pt,3pt) -- (80pt,-3pt);
	\draw[line width=0.2pt] (-3pt,60pt) -- (3pt,60pt);
	
	\node[anchor=east] at (-2pt,0pt) {\scriptsize $0$};
	\node[anchor=east] at (-2pt,60pt) {\scriptsize $2 d \! \cdot \! n$};
	\node[anchor=south] at (0pt,68pt) {\footnotesize $\texttt{opt}(R)$};
	
	\node[anchor=north] at (0pt,-2pt) {\scriptsize $d\!+\!2$};
	\node[anchor=north] at (80pt,-2pt) {\scriptsize $2d\!+\!2$};
	\node[anchor=west] at (99pt,0pt) {\footnotesize $R$};
	
	\draw[thick, midgray] (19pt,60pt) -- (21pt,60pt) -- (21pt,54pt) -- (19pt,54pt);
	\draw[thick, midgray] (21pt,57pt) -- (23pt,57pt);
	\node[anchor=west] at (20pt,57pt) {\scriptsize $2n$};
	
	\draw[thick, midgray] (72pt,14pt) -- (72pt,16pt) -- (80pt,16pt) -- (80pt,14pt);
	\draw[thick, midgray] (76pt,16pt) -- (76pt,18pt);
	\node[anchor=south] at (76pt,15pt) {\scriptsize $1$};
	
	\draw[darkgray, thick] (0pt,60pt) -- (8pt,60pt) -- (8pt,54pt) -- (16pt,54pt) -- (16pt,48pt) -- (24pt,48pt) -- (24pt,42pt) -- (32pt,42pt) -- (32pt,36pt) -- (40pt,36pt) -- (40pt,30pt) -- (48pt,30pt) -- (48pt,24pt) -- (56pt,24pt) -- (56pt,18pt) -- (64pt,18pt) -- (64pt,12pt) -- (72pt,12pt) -- (72pt,6pt) -- (80pt,6pt) -- (80pt,0pt);	
	
\end{tikzpicture}
	\vspace{-5pt}
	\caption{Tradeoff diagram for the DAG shown in Figure \ref{fig:tradeoff_graph}.}
	\label{fig:tradeoff_diag}
\endminipage\hfill
\end{figure*}

The \textsc{oneshot} version naturally implies that there are at most $n$ executions of Step 3 in a pebbling. Once a pebble has been deleted from a node, there is no way to place a pebble on this node again, so Step 4 can also be called at most $n$ times. Thus any optimal pebbling indeed consists of $O(\Delta \cdot n)$ steps.

In \textsc{nodel}, Step 4 is not available. Each computation in a pebbling is either a first one at a node (there are at most $n$ of these), or it is a recomputation step that replaces a specific blue pebble. Since blue pebbles are only created in Step 2, which is invoked at most $O(\Delta \cdot n)$ times, the number of recomputations is also within $O(\Delta \cdot n)$.

In the \textsc{compcost} model, assume that the number of Steps 3 and 4 is altogether $p$. Since each deletion step removes a pebble that was previously placed on the DAG, the number of deletions is at most the number of computations. Thus at least half of the non-transfer steps are computations, which means that Steps 3 and 4 altogether have a cost of at least $\epsilon \! \cdot \! \frac{p}{2}$. This shows that if $p > \frac{2}{\epsilon} \cdot (2 \Delta \! + \! 1 \! + \epsilon) \! \cdot \! n$ in a pebbling, then the cost of non-transfer steps is already larger than $(2 \Delta \! + \! 1 \! + \epsilon) \! \cdot \! n$, so the pebbling is not optimal. Thus we have $p \leq \frac{2}{\epsilon} \cdot (2 \Delta \! + \! 1 \! + \epsilon) \! \cdot \! n = O(\Delta \cdot n)$ in any optimal pebbling. \qedhere
\end{proof}

Note that in the \textsc{compcost} model, the minimal cost of a pebbling is in the magnitude of $\epsilon \cdot n$ (unless most nodes of the DAG are source nodes), since each non-source node has to be computed at least once. 
Therefore the minimal and maximal cost are within a constant factor $\frac{2 \Delta + 1 + \epsilon}{\epsilon}$, similarly to the case of \textsc{nodel}. Because of this, the \textsc{oneshot} model still remains the most interesting one theoretically. Hence, when discussing the results of the paper, we primarily focus on the \textsc{oneshot} model, and then go on to discuss the applicability of the specific result to other models.

We illustrate the cost of operations in Table \ref{tab:model}, and summarize the main properties of the models and our results in Table \ref{tab:model2}.

\normalsize

\section{Time-memory tradeoff} \label{sec:tradeoff}

We now present a DAG that allows us to analyze the worst-case tradeoff between the parameter $R$ and the optimal pebbling cost with $R$ red pebbles, denoted by $\texttt{opt}(R)$.

Let us focus on the \textsc{oneshot} model. Recall that the minimal cost of pebbling in this model is 0, and the maximal cost is $(2 \Delta +1) \cdot n$. Furthermore, we always have $\texttt{opt}(R-1) \leq \texttt{opt}(R) + 2n$ (assuming, of course, that $R-1 \geq \Delta+1$). This is because with $R-1$ red pebbles, we can basically follow the same strategy as with $R$ red pebbles, maintaining the invariant that our $R-1$ red pebbles always occupy the same nodes as in the original strategy, with one of the red pebbles missing.
For each computation step when the missing red pebble is among the input nodes, we can select another red pebble that is not used in this computation, and move this red pebble here for the computation at a cost of 2. Since there are at most $n$ computations, the technique adds an extra cost of at most $2n$ to $\texttt{opt}(R)$. In other words, each time we allow a further red pebble, the optimum cost can decrease by at most $2n$.

Consider the DAG construction shown in Figure \ref{fig:tradeoff_graph}, which consists of two control groups of size $d$, and a chain of nodes that are each enabled by the previous node in the chain, and one of the two control groups in an alternating fashion. In case of a long chain, the control groups amount to a negligibly small part of this DAG, so for simplicity, we will now use $n$ to denote the length of the chain. An optimal pebbling strategy of the DAG keeps most of the red pebbles in the control groups, and only has 2 red pebbles in the chain at any time: after computing the next chain node, the previous chain node is immediately deleted, as it is not needed ever again.

In this graph, $\texttt{opt}(2d+2) = 0$: if we can always keep both control groups in cache, then the chain can sequentially be computed at no cost.
On the other hand, $\texttt{opt}(d+2) = 2d \cdot n$: with only $d+2$ red pebbles, we have to transfer all $d$ red pebbles from one control group to the other (at a cost of $2d$) for the computation of every single chain node. With $\Delta = d+1$, this translates to $\texttt{opt}(2 \Delta) = 0$ and $\texttt{opt}(\Delta + 1) = (2 \Delta - 2) \cdot n$, showing that the cost ranges from 0 to almost our upper bound of $(2 \Delta + 1) \cdot n$.

What makes the example more interesting, however, is the fact that the optimum gradually decreases between these points. Starting from $R=2d+2$, whenever we take away a red pebble, it means that for each node of the chain, 2 extra transfer operations are required to move another red pebble to the other control group. I.e., with $d+2+i$ red pebbles available, we have to move $d-i$ red pebbles for computing each chain node, and thus $\texttt{opt}(d+2+i) = 2(d-i) \cdot n$ for every $i \in [0, d]$. This means that it is indeed possible to exhibit the maximal drop $2n$ in every step of the function $\texttt{opt}(R)$ while going from almost the upper bound $(2 \Delta + 1) \cdot n$ down to 0, as illustrated by the tradeoff diagram in Figure \ref{fig:tradeoff_diag}.

We also obtain a similar tradeoff diagram for the remaining three models, after adding a H2C gadget to ensure that nodes in the control groups are never recomputed. In the \textsc{base} model, the diagram is essentially the same as in \textsc{oneshot}.
In \textsc{nodel}, each chain node has to be turned blue instead of being deleted, hence every $\texttt{opt}(R)$ value is increased by $n$. In \textsc{compcost}, each computation costs $\epsilon$, so $\texttt{opt}(R)$ values are increased by $\epsilon \cdot n$. These tradeoff diagrams are discussed in Appendix \ref{App:A:tradeoff}.

\section{NP-hardness} \label{sec:NP}

We now present a simple proof of NP-hardness for the pebbling problem through a reduction from the Hamiltonian Path problem, which is long known to be NP-complete \cite{HampathHard}. Unlike our other results, we first prove this result in the \textsc{nodel} model. The proof for the remaining models is discussed in Appendix \ref{App:A:NP}.

\renewcommand{\proofname}{Proof for the \textit{\textsc{nodel}} model.}
\begin{theorem} \label{th:hampath}
The \textsc{Pebbling} problem is NP-complete in the \textsc{oneshot}, \textsc{nodel} and \textsc{compcost} models, and NP-hard in the \textsc{base} model.
\end{theorem}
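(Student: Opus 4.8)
\emph{Membership.} For the three modified models, containment in NP is a free consequence of Lemma~\ref{lem:lengthbound}: if some pebbling of cost at most $C$ exists then an optimal one does, and by the lemma it has only $O(\Delta \cdot n)$ steps, so writing it out explicitly gives a polynomial-size certificate verifiable in polynomial time. For \textsc{base} we only aim for NP-hardness, which follows from the same reduction. The substance is thus the hardness reduction, and following the paper's announced plan I would carry it out first in \textsc{nodel} and then note the adaptations to the other three models.

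\emph{The reduction.} Reduce from Hamiltonian Path on a graph $H=(V,E)$ with $|V|=k$, which is NP-complete \cite{HampathHard}. For each $v\in V$ I would create a target node $t_v$ together with an input group $I_v$ of $R-1$ nodes, all with edges into $t_v$, and make each $t_v$ a sink; then every pebbling must compute every $t_v$, and since computing a target uses all $R-1$ red pebbles (on $I_v$) plus one more (on $t_v$), no two targets can be in progress simultaneously, so a pebbling necessarily computes the targets in some sequence. I would encode $E$ by letting $I_u$ and $I_v$ \emph{share} nodes according to adjacency: for instance, place a common node $m_{uv}\in I_u\cap I_v$ for each edge $uv\in E$ and fill each $I_v$ up to size $R-1$ with private nodes, so that $|I_v\setminus I_u|$ is strictly smaller when $uv\in E$ than when $uv\notin E$. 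Finally, attach an H2C gadget (Figure~\ref{fig:h2c}) to each input-group node, so that deleting and later recomputing such a node is strictly more expensive than saving it as blue and reloading it; hence a reasonable pebbling never recomputes an input-group node, and the only cost it pays in passing from computing $t_u$ to computing $t_w$ is $2\cdot|I_w\setminus I_u|$ (turn the now-idle red pebbles on $I_u\setminus I_w$ blue, then reload $I_w\setminus I_u$ to red).

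\emph{Cost accounting and the two directions.} Up to an additive constant independent of the visiting order (covering the first computations and the many blue pebbles every node ends up as in \textsc{nodel}), the cost of a pebbling computing the targets in order $v_1,\dots,v_k$ is $2\sum_{i=1}^{k-1}|I_{v_{i+1}}\setminus I_{v_i}|$, which is minimized precisely when each $v_iv_{i+1}\in E$, i.e.\ when $v_1,\dots,v_k$ is a Hamiltonian path of $H$. Setting $C$ to this minimum, the forward direction is immediate: a Hamiltonian path yields, via the strategy above, a pebbling of cost exactly $C$. For the converse I would first argue, by an exchange argument exploiting the H2C gadgets and the all-red-pebbles-per-target property, that a pebbling of cost at most $C$ may be assumed to compute each target exactly once and to perform only the ``clean'' transitions above; then even a single non-edge among consecutive targets forces cost at least $C+2>C$, so the computation order is a Hamiltonian path.

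\emph{Other models and the main obstacle.} The same DAG handles the remaining models with minor changes: in \textsc{oneshot} recomputation is forbidden outright, which can only help; in \textsc{base} the H2C gadgets are precisely what prevents free deletion--recomputation of input-group nodes from collapsing every transition cost to $0$; and in \textsc{compcost} every pebbling performs the same number of computations, so the extra $\epsilon$ per computation is again an order-independent additive term and $C$ shifts by $\epsilon\cdot n$. The delicate part throughout is the converse direction: ruling out ``cheating'' pebblings that transfer input groups only partially, interleave work on several targets, or (in \textsc{base}/\textsc{oneshot}) exploit deletions and recomputations of input-group or H2C nodes in some unexpected order, and calibrating the H2C parameters and the value of $C$ so that the gap between the Hamiltonian and non-Hamiltonian cases survives all such strategies. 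By comparison, the forward direction and NP membership are routine.
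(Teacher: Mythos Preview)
Your reduction is the same as the paper's: represent each vertex of the Hamiltonian Path instance by a target node whose input group has size $R-1$, share exactly one input node between $I_u$ and $I_v$ when $uv$ is an edge, and observe that the transition cost between consecutively computed targets is strictly smaller precisely when they are adjacent. Both proofs rely on the key structural fact that computing any target ties up all $R$ red pebbles, which linearizes the pebbling into a permutation of targets and reduces the cost analysis to a sum over consecutive pairs.

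The one notable difference is your uniform use of H2C gadgets. The paper's detailed proof is for \textsc{nodel} and uses \emph{no} H2C gadgets there: the contact nodes remain genuine sources, and since in \textsc{nodel} a source can be re-pebbled red for free via Step~3, moving a red pebble from an old source to a new one costs only $1$ (turn the old position blue), not $2$. This gives transition costs $N$ versus $N{+}1$ and a clean threshold $C=(N-1)\cdot N$. The paper then adapts to \textsc{oneshot} by merely adjusting $C$ (deletions are now free at last use, and recomputation is forbidden anyway), and only introduces H2C gadgets for \textsc{base} and \textsc{compcost}, where free source recomputation would otherwise collapse the transition costs. Your version, with H2C gadgets everywhere, is still correct but buys uniformity at the price of a heavier \textsc{nodel}/\textsc{oneshot} argument and a larger order-independent additive term to track. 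Also, your remark that in \textsc{compcost} ``every pebbling performs the same number of computations'' is not literally true (recomputation is allowed), though with H2C gadgets it becomes true for any pebbling meeting the threshold, which is what you actually need.
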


\begin{proof}
Lemma \ref{lem:lengthbound} already points out that except for the \textsc{base} model, the problem is within NP, since the optimal pebbling consists of $O(\Delta \cdot n)$ steps.

\begin{figure*}
	\centering
	\begin{tikzpicture}
	
	\draw[darkgray, very thick] (-86pt,24pt) -- (-86pt,-24pt);
	\draw[very thick] (-80pt,7pt) -- (-92pt,-7pt);
	\draw[very thick] (-80pt,-7pt) -- (-92pt,7pt);
	\draw[black, fill=white] (-86pt,24pt) circle (0.8ex);
	\draw[black, fill=white] (-86pt,-24pt) circle (0.8ex);
	\node[anchor=north] at (-81pt,24pt) {\small $a$};
	\node[anchor=north] at (-81pt,-24pt) {\small $b$};
	
	\draw[very thick, gray, arrows=-stealth] (-65pt,0pt) -- (-25pt,0pt);
	\node[anchor=south] at (-45pt,5pt) {\scriptsize \textit{transformed}};
	\node[anchor=south] at (-45pt,-1pt) {\scriptsize \textit{to}};

	\draw[midgray, densely dotted, very thick] (-2pt,-10pt) -- (-8.5pt,-10pt) -- (-25pt,52pt);
	\node[anchor=south] at (-25pt,57pt) {\scriptsize $b$'s contact};
	\node[anchor=south] at (-25pt,50pt) {\scriptsize node for $a$};
	
	\draw[midgray, densely dotted, very thick] (2pt,10pt) -- (8.5pt,10pt) -- (25pt,52pt);
	\node[anchor=south] at (25pt,57pt) {\scriptsize $a$'s contact};
	\node[anchor=south] at (25pt,50pt) {\scriptsize node for $b$};
	
	\draw[line width=0.2pt, arrows=-stealth] (0pt,10pt) -- (38pt,21pt);
	\draw[line width=0.2pt, arrows=-stealth] (0pt,19pt) -- (37pt,22.5pt);
	\draw[line width=0.2pt, arrows=-stealth] (0pt,28pt) -- (37pt,24.5pt);
	\draw[line width=0.2pt, arrows=-stealth] (0pt,37pt) -- (38pt,26pt);
	
	\draw[line width=0.2pt, arrows=-stealth] (0pt,-10pt) -- (38pt,-21pt);
	\draw[line width=0.2pt, arrows=-stealth] (0pt,-19pt) -- (37pt,-22.5pt);
	\draw[line width=0.2pt, arrows=-stealth] (0pt,-28pt) -- (37pt,-24.5pt);
	\draw[line width=0.2pt, arrows=-stealth] (0pt,-37pt) -- (38pt,-26pt);
	
	\draw[black, fill=white] (0pt,10pt) circle (0.7ex);
	\draw[black, fill=white] (0pt,19pt) circle (0.7ex);
	\draw[black, fill=white] (0pt,28pt) circle (0.7ex);
	\draw[black, fill=white] (0pt,37pt) circle (0.7ex);
	
	\draw[darkgray, very thick, dashed] (-6pt,4pt) -- (-6pt,44pt) -- (6pt,44pt) -- (6pt,4pt) -- cycle;
	
	\draw[black, fill=white] (0pt,-10pt) circle (0.7ex);
	\draw[black, fill=white] (0pt,-19pt) circle (0.7ex);
	\draw[black, fill=white] (0pt,-28pt) circle (0.7ex);
	\draw[black, fill=white] (0pt,-37pt) circle (0.7ex);
	
	\draw[darkgray, very thick, dashed] (-6pt,-4pt) -- (-6pt,-44pt) -- (6pt,-44pt) -- (6pt,-4pt) -- cycle;
	
	\draw[black, fill=white] (40pt,23.5pt) circle (0.7ex);
	\draw[black, fill=white] (40pt,-23.5pt) circle (0.7ex);
	
	\draw[gray, thick] (-8pt,-50pt) -- (-8pt,-52pt) -- (8pt,-52pt) -- (8pt,-50pt);
	\draw[gray, thick] (0pt,-52pt) -- (0pt,-54pt);
	\node[anchor=north] at (0pt,-52pt) {\scriptsize groups};
	\node[anchor=north] at (0pt,-59pt) {\scriptsize of $N\!-\!1$};
	\node[anchor=north] at (0pt,-66pt) {\scriptsize nodes};
	
	\draw[gray, thick] (32pt,-50pt) -- (32pt,-52pt) -- (48pt,-52pt) -- (48pt,-50pt);
	\draw[gray, thick] (40pt,-52pt) -- (40pt,-54pt);
	\node[anchor=north] at (40pt,-52pt) {\scriptsize target};
	\node[anchor=north] at (40pt,-59pt) {\scriptsize nodes};

	%---

	\draw[darkgray, very thick] (144pt,24pt) -- (144pt,-24pt);
	\draw[black, fill=white] (144pt,24pt) circle (0.8ex);
	\draw[black, fill=white] (144pt,-24pt) circle (0.8ex);
	\node[anchor=north] at (149pt,24pt) {\small $a$};
	\node[anchor=north] at (149pt,-24pt) {\small $b$};
	
	\draw[very thick, gray, arrows=-stealth] (165pt,0pt) -- (205pt,0pt);
	\node[anchor=south] at (185pt,5pt) {\scriptsize \textit{transformed}};
	\node[anchor=south] at (185pt,-1pt) {\scriptsize \textit{to}};

	\draw[midgray, densely dotted, very thick] (228pt,0pt) -- (220pt,0pt) -- (205pt,52pt);
	\node[anchor=south] at (205pt,57pt) {\scriptsize $a$'s contact node for $b$ and};
	\node[anchor=south] at (205pt,49pt) {\scriptsize $b$'s contact node for $a$};
	
	\draw[line width=0.2pt, arrows=-stealth] (230pt,0pt) -- (268pt,21pt);
	\draw[line width=0.2pt, arrows=-stealth] (230pt,19pt) -- (267pt,22.5pt);
	\draw[line width=0.2pt, arrows=-stealth] (230pt,28pt) -- (267pt,24.5pt);
	\draw[line width=0.2pt, arrows=-stealth] (230pt,37pt) -- (268pt,26pt);
	
	\draw[line width=0.2pt, arrows=-stealth] (230pt,0pt) -- (268pt,-21pt);
	\draw[line width=0.2pt, arrows=-stealth] (230pt,-19pt) -- (267pt,-22.5pt);
	\draw[line width=0.2pt, arrows=-stealth] (230pt,-28pt) -- (267pt,-24.5pt);
	\draw[line width=0.2pt, arrows=-stealth] (230pt,-37pt) -- (268pt,-26pt);
	
	\draw[black, fill=white] (230pt,0pt) circle (0.7ex);
	\draw[black, fill=white] (230pt,19pt) circle (0.7ex);
	\draw[black, fill=white] (230pt,28pt) circle (0.7ex);
	\draw[black, fill=white] (230pt,37pt) circle (0.7ex);
	
	\draw[darkgray, very thick, dashed] (223pt,-6pt) -- (223pt,44pt) -- (235pt,44pt) -- (235pt,-6pt) -- cycle;
	
	\draw[black, fill=white] (230pt,-19pt) circle (0.7ex);
	\draw[black, fill=white] (230pt,-28pt) circle (0.7ex);
	\draw[black, fill=white] (230pt,-37pt) circle (0.7ex);
	
	\draw[darkgray, very thick, dashed] (225pt,6pt) -- (225pt,-44pt) -- (237pt,-44pt) -- (237pt,6pt) -- cycle;
	
	\draw[black, fill=white] (270pt,23.5pt) circle (0.7ex);
	\draw[black, fill=white] (270pt,-23.5pt) circle (0.7ex);
	
	\draw[gray, thick] (222pt,-50pt) -- (222pt,-52pt) -- (238pt,-52pt) -- (238pt,-50pt);
	\draw[gray, thick] (230pt,-52pt) -- (230pt,-54pt);
	\node[anchor=north] at (230pt,-52pt) {\scriptsize groups};
	\node[anchor=north] at (230pt,-59pt) {\scriptsize of $N\!-\!1$};
	\node[anchor=north] at (230pt,-66pt) {\scriptsize nodes};
	
	\draw[gray, thick] (262pt,-50pt) -- (262pt,-52pt) -- (278pt,-52pt) -- (278pt,-50pt);
	\draw[gray, thick] (270pt,-52pt) -- (270pt,-54pt);
	\node[anchor=north] at (270pt,-52pt) {\scriptsize target};
	\node[anchor=north] at (270pt,-59pt) {\scriptsize nodes};

\end{tikzpicture}
	\caption{Input groups created for two nodes of $G$ in Theorem \ref{th:hampath} if they are not connected (left) and if they are connected (right). In the second case, the corresponding contact nodes are merged into one.}
	\label{fig:hampath}
\end{figure*}
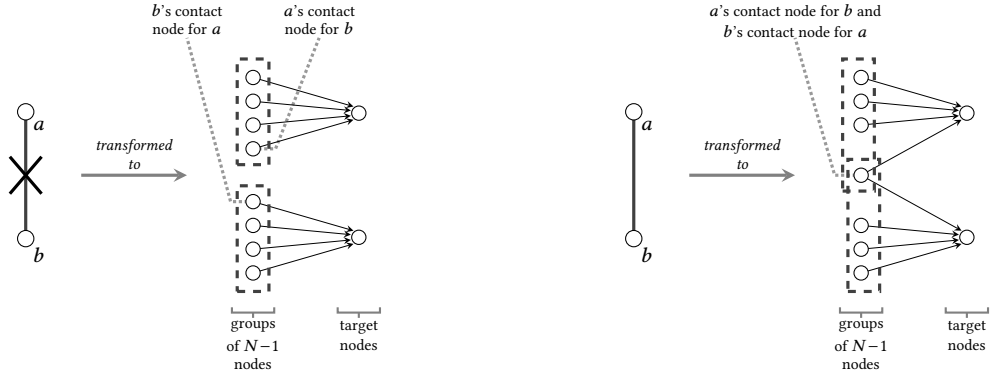

Assume we have a graph $G$ on $N$ nodes and $M$ edges, in which we want to solve the Hamiltonian Path problem; we show how to convert this into a pebbling problem in a DAG. To avoid confusion, we use $a$ and $b$ to denote the nodes of the original graph $G$, while we use $t$, $u$ and $v$ to denote the nodes in the DAG.

For our reduction, we create $N$ distinct \textit{target nodes} that are sinks of the DAG, each representing a distinct node of $G$. For a node $a$ of $G$, we denote the corresponding target node by $t_a$. To each target node $t_a$, we specify a set of exactly $N-1$ distinct nodes that are the inputs of (i.e. have an outgoing edge to) $t_a$; we refer to this set as the input group of node $a$.

These input groups are formed as follows: for every node $a$, let us consider all the other $N-1$ nodes of $G$ (excluding $a$), and for each such other node $b$, we create a specific node $v_{a,b}$ in the DAG. We refer to this node $v_{a,b}$ as the \textit{contact node} in group $a$ for node $b$. We then select the $N-1$ contact nodes in group $a$ as the input group for the target node $t_a$. This provides a DAG with $N \cdot (N-1)$ source and $N$ sink nodes.

Finally, for each edge $(a,b)$ of $G$, we merge the two corresponding contact nodes (i.e. the contact node in group $a$ for node $b$, and the contact node in group $b$ for node $a$) into a single node, as illustrated in Figure \ref{fig:hampath}. That is, if $a$ and $b$ are neighbors in $G$, then the new merged node $v_{a,b}=v_{b,a}$ will be an input of both $t_a$ and $t_b$ in the DAG, but if $a$ and $b$ are not neighbors, then the inputs of $t_a$ and $t_b$ will remain disjoint. This gives us a DAG with altogether $N \cdot (N-1) - M$ source nodes and $N$ sink nodes. We consider the pebbling problem on this graph with $R=N$; note that this is the minimal possible $R$ since $\Delta=N-1$.

Each pebbling of the graph has to visit the input groups in some order to compute all the sink (target) nodes. We need all the $N$ available red pebbles for every such computation, i.e. we need to place $N-1$ red pebbles on the nodes of the input group, and the final red pebble on the newly computed sink. Hence the computations of the target nodes are distinct steps during the pebbling process when we know the position of all the red pebbles; this essentially allows us to characterize the entire pebbling by the order in which the target nodes are computed.

Between the computation of two distinct targets $t_a$ and $t_b$, we always have to (i) turn the previously computed sink blue to free the red pebble from the sink, and (ii) move the red pebbles from one input group to the other to enable the computation. The first step always has a cost of 1. In the second step, moving a red pebble also has a cost of 1: the previous position of the red pebble has to be turned to blue (at a cost of 1), and then we can place a red pebble at the new position (since source nodes of the DAG can be recomputed at no cost).

However, the exact number of red pebbles that we have to move in this second step depends on whether $a$ and $b$ are neighbors in $G$. If $a$ and $b$ are adjacent, then the two input groups intersect in a node $v_{a,b}=v_{b,a}$, so we only have to move $N-1$ red pebbles; otherwise the two input groups are disjoint, and thus we have to move $N$ red pebbles. Thus altogether, the required cost of the operations between two consecutive group-visits is either $N+1$ or $N$, depending on whether the two nodes are adjacent in $G$. Note that a suboptimal pebbling might also execute further operations, but these steps are always necessary, so this provides a lower bound on the cost between two target computations.

Hence, the pebbling of the DAG corresponds to visiting the nodes of $G$ in some permutation $\pi$, with the cost of the pebbling directly dependent on the number of consecutive node pairs in $\pi$ that are connected in $G$. There exists a pebbling strategy of cost at most $(N-1) \cdot N$ only if there is a permutation $\pi$ such that each pair of consecutive nodes is connected, i.e. if there exists a Hamiltonian Path in $G$. However, if a Hamiltonian Path exists, then visiting the input groups in this order indeed gives a pebbling of cost $(N-1) \cdot N$, which completes our reduction. \qedhere
\end{proof}

Slightly modified versions of this construction provide the same complexity result in the remaining models. The \textsc{oneshot} model behaves the same way, except that we are allowed to delete the red pebbles on the source nodes after their last use, and thus we have to reduce the allowed cost accordingly. In the \textsc{base} model, we further insert a H2C gadget, and account for the extra cost it introduces; this makes it equivalent to the \textsc{oneshot} model. Finally, the DAG of the \textsc{base} model also suffices for the \textsc{compcost} model, provided that we further increase the allowed cost by $\epsilon$ for each compute operation.

We again point out that the problem in the \textsc{nodel} models was already proven to be NP-complete before \cite{RBcomplex}.

\section{$\delta$-inapproximability for $\delta<2$} \label{sec:VC}

We now show that in the \textsc{oneshot} model, approximating the optimal pebbling to a constant factor smaller than 2 also implies that the Vertex Cover problem is also approximable to a constant factor smaller than 2. However, if the unique games conjecture holds, then such an approximation is not possible for Vertex Cover \cite{VertexCoverHard}.

\begin{theorem} \label{th:vc}
For any $\delta<2$, there is no $\delta$-approximation algorithm for the \textsc{Pebbling} problem in the \textsc{oneshot} model, unless the unique games conjecture is falsified.
\end{theorem}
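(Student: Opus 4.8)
The plan is to reduce from Vertex Cover: given a graph $G=(V,E)$ on which we want a minimum vertex cover of size $\tau(G)$, we build in polynomial time a DAG $D_G$ together with a red-pebble bound $R$ so that the optimal \textsc{oneshot} pebbling cost $\texttt{opt}(R)$ of $D_G$ equals $c\cdot\tau(G)$, up to lower-order additive slack, for an absolute constant $c$. The construction reuses the input-group machinery of Theorem~\ref{th:hampath}: for every vertex $a\in V$ we create a ``marker'' node $c_a$, and for every edge $e\in E$ a sink target $t_e$ whose input group has size $R-1$ and is built so that computing $t_e$ becomes free (cost $0$ in the \textsc{oneshot} sense) exactly when one of the two endpoint markers of $e$ is currently red. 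All markers, together with a shared block of filler nodes that is meant to stay red throughout, are guarded by H2C gadgets as in Section~\ref{sec:basics}, so that no sensible pebbling ever deletes and recomputes them; this leaves the only real decision in an optimal pebbling to be, for each edge target, which endpoint marker ``pays'' for it, together with an order that processes all edges charged to the same marker consecutively.

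With that rigidity in place, I would prove the two matching bounds. Easy direction: given a vertex cover $S$ of $G$, load only the markers $\{c_a : a\in S\}$, and for each $a\in S$ process the block of all its incident edge targets while $c_a$ stays red; this loads each of the $|S|$ markers a bounded number of times and nothing else, giving a pebbling of cost at most $c\cdot|S|+O(1)$. Hard direction: using the H2C guards and the fact that each $t_e$ genuinely requires one of $c_a,c_b$ to be resident at computation time, show that the set of markers ever made red by a pebbling is necessarily a vertex cover of $G$, and that each such marker forces at least $c$ transfer operations that cannot be shared with the others; hence every pebbling costs at least $c\cdot\tau(G)$. Combining, $\texttt{opt}(R)=c\cdot\tau(G)\,(1+o(1))$, and from any $\delta$-approximate pebbling one reads off a vertex cover of size at most $\delta\cdot\tau(G)\,(1+o(1))$. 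As $\delta<2$ is a fixed constant, this contradicts the $(2-\varepsilon)$-inapproximability of Vertex Cover under the unique games conjecture~\cite{VertexCoverHard} (take $\varepsilon$ slightly below $2-\delta$ and $G$ large enough to absorb the $o(1)$ term).

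The delicate part is designing the edge gadget so that the reduction is genuinely gap-preserving at factor $2$. A naive gadget in which $t_e$ simply has both markers $c_a$ and $c_b$ in its input group would make every edge pay for its ``secondary'' endpoint too, producing an optimum of the form $\Theta(|E|)+c\cdot\tau(G)$; since typically $|E|\gg\tau(G)$, that $\Theta(|E|)$ base cost dominates and destroys the factor-$2$ gap. The gadget therefore has to realise the disjunction ``$c_a$ \emph{or} $c_b$ suffices'' despite input groups being inherently conjunctive, so that an edge already settled by one endpoint incurs no further per-edge transfers; I expect this to require a contact-node merging construction in the spirit of Figure~\ref{fig:hampath}, a careful choice of $R$ and of the common filler size, and possibly an amplification step (a blow-up of $G$) to kill any residual additive overhead. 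The remaining obligations --- polynomial size of $D_G$, $R\ge\Delta+1$, membership in NP via Lemma~\ref{lem:lengthbound}, and the arithmetic matching the two cost bounds --- follow the pattern of the proof of Theorem~\ref{th:hampath}.
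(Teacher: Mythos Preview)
Your proposal has a genuine gap that you yourself flag: the whole reduction rests on an edge gadget realising the disjunction ``$c_a$ \emph{or} $c_b$ suffices'', and you do not build it. In red--blue pebbling the inputs of a node are conjunctive by definition, and there is no obvious way to make a target computable from either of two alternatives without forcing both (two targets would both have to be computed; a shared intermediate would again have fixed inputs). The contact-node merging of Theorem~\ref{th:hampath} does not implement disjunction either --- there it only reduces the transfer cost by one when two groups share a node, which is far from what you need here. Without this gadget, neither the upper bound (cost $c\cdot|S|+O(1)$) nor the lower bound (each loaded marker forces $\ge c$ unshared transfers) can be argued.

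The paper's construction sidesteps the disjunction problem entirely by encoding vertices rather than edges. For each vertex $a$ it creates two almost-identical input groups $V_{a,1}$ and $V_{a,2}$ sharing $k-N$ common nodes with $k=\omega(N^2)$; the cost of a pebbling is dominated by $2k'$ for every vertex whose two groups are \emph{not} visited consecutively. Edges of $G$ are encoded purely as ordering constraints: for each edge $(a,b)$, a target of $V_{a,1}$ is placed inside $V_{b,2}$, so $V_{a,1}$ must be visited before $V_{b,2}$. The vertices whose two groups can be visited consecutively therefore form an independent set (if $a$ and $b$ are adjacent, at most one of them can have both its groups back-to-back), and the complementary non-consecutive set is a vertex cover; the optimal cost is $2k'\cdot|VC_0|+O(N^2)$, which is gap-preserving. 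No disjunctive gadget is needed because the reduction exploits the independent-set / vertex-cover complementarity rather than per-edge choices.
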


\renewcommand{\proofname}{Proof (\textit{with details in Appendix \ref{App:A:VC}}).}

\begin{proof} Assume we are given a graph $G$ on $N$ nodes. We again take every node $a$ of $G$, and we now create two distinct input groups for $a$; we refer to these as the first-level group of $a$ (denoted by $V_{a,1}$) and the second-level group of $a$ (denoted by $V_{a,2}$). Both of these groups have the same size $k$ for every $a$; we choose the parameter $k$ such that $k=\omega(N^2)$.

\begin{figure*}
	\centering
	\begin{tikzpicture}
	
	\draw[line width=0.3pt, arrows=-stealth] (77pt,18pt) -- (118.5pt,87.3pt);
	\draw[line width=0.3pt, arrows=-stealth] (77pt,104pt) -- (118.5pt,32.8pt);
	
	%---
	
	\draw[line width=0.3pt, arrows=-stealth] (127pt,17pt) -- (176.5pt,17pt);
	\draw[line width=0.3pt, arrows=-stealth] (77pt,13.5pt) -- (117pt,-28pt);
	\draw[line width=0.3pt, arrows=-stealth] (77pt,8pt) -- (117.5pt,-42.5pt);
	
	\draw[black, fill=white] (0pt,0pt) circle (0.8ex);
	\draw[black, fill=white] (0pt,15pt) circle (0.8ex);
	\draw[black, fill=white] (0pt,30pt) circle (0.8ex);
	
	\draw[black, fill=white] (50pt,0pt) circle (0.8ex);
	\draw[black, fill=white] (50pt,15pt) circle (0.8ex);
	\draw[black, fill=white] (50pt,30pt) circle (0.8ex);
	\draw[black, fill=white] (60pt,0pt) circle (0.8ex);
	\draw[black, fill=white] (60pt,15pt) circle (0.8ex);
	\draw[black, fill=white] (60pt,30pt) circle (0.8ex);
	\draw[black, fill=white] (70pt,0pt) circle (0.8ex);
	\draw[black, fill=white] (70pt,15pt) circle (0.8ex);
	\draw[black, fill=white] (70pt,30pt) circle (0.8ex);
	
	\draw[black, fill=white] (120pt,0pt) circle (0.8ex);
	\draw[black, fill=white] (120pt,15pt) circle (0.8ex);
	\draw[black, fill=white] (120pt,30pt) circle (0.8ex);
	
	\draw[black, fill=white] (180pt,17pt) circle (0.8ex);
	
	\draw[black, fill=white] (120pt,-30pt) circle (0.8ex);
	\draw[black, fill=white] (120pt,-45pt) circle (0.8ex);
	
	\draw[darkgray, very thick, dashed] (-7pt,-6pt) -- (77pt,-6pt) -- (77pt,38pt) -- (-7pt,38pt) -- cycle;
	\draw[darkgray, very thick, dashed] (43pt,-8pt) -- (127pt,-8pt) -- (127pt,36pt) -- (43pt,36pt) -- cycle;
	
	\node[anchor=north] at (111pt,32.5pt) {\scriptsize $t_{a,1,b}$};
	\node[anchor=north] at (31pt,-7pt) {\scriptsize $V_{b,1}$};
	\node[anchor=north] at (83pt,-9pt) {\scriptsize $V_{b,2}$};
	\node[anchor=center] at (183.5pt,9.5pt) {\scriptsize $t_{b,2}$};
	
	\draw[lightgray, thick] (125pt,-27pt) -- (127pt,-27pt) -- (127pt,-48pt) -- (125pt,-48pt);
	\draw[lightgray, thick] (127pt,-37.5pt) -- (129pt,-37.5pt);
	\node[anchor=center] at (148pt,-33.5pt) {\scriptsize further target};
	\node[anchor=center] at (148pt,-41.5pt) {\scriptsize nodes of $V_{b,1}$};
	
	%---
	
	\draw[line width=0.3pt, arrows=-stealth] (127pt,107pt) -- (176.5pt,107pt);
	\draw[line width=0.3pt, arrows=-stealth] (77pt,108.5pt) -- (117pt,148pt);
	\draw[line width=0.3pt, arrows=-stealth] (77pt,114.5pt) -- (117.5pt,162.5pt);
	
	\draw[black, fill=white] (0pt,90pt) circle (0.8ex);
	\draw[black, fill=white] (0pt,105pt) circle (0.8ex);
	\draw[black, fill=white] (0pt,120pt) circle (0.8ex);
	
	\draw[black, fill=white] (50pt,90pt) circle (0.8ex);
	\draw[black, fill=white] (50pt,105pt) circle (0.8ex);
	\draw[black, fill=white] (50pt,120pt) circle (0.8ex);
	\draw[black, fill=white] (60pt,90pt) circle (0.8ex);
	\draw[black, fill=white] (60pt,105pt) circle (0.8ex);
	\draw[black, fill=white] (60pt,120pt) circle (0.8ex);
	\draw[black, fill=white] (70pt,90pt) circle (0.8ex);
	\draw[black, fill=white] (70pt,105pt) circle (0.8ex);
	\draw[black, fill=white] (70pt,120pt) circle (0.8ex);
	
	\draw[black, fill=white] (120pt,90pt) circle (0.8ex);
	\draw[black, fill=white] (120pt,105pt) circle (0.8ex);
	\draw[black, fill=white] (120pt,120pt) circle (0.8ex);
	
	\draw[black, fill=white] (180pt,107pt) circle (0.8ex);
	
	\draw[black, fill=white] (120pt,150pt) circle (0.8ex);
	\draw[black, fill=white] (120pt,165pt) circle (0.8ex);
	
	\draw[darkgray, very thick, dashed] (-7pt,84pt) -- (77pt,84pt) -- (77pt,128pt) -- (-7pt,128pt) -- cycle;
	\draw[darkgray, very thick, dashed] (43pt,82pt) -- (127pt,82pt) -- (127pt,126pt) -- (43pt,126pt) -- cycle;
	
	\node[anchor=south] at (109pt,86.5pt) {\scriptsize $t_{b,1,a}$};
	\node[anchor=north] at (31pt,83pt) {\scriptsize $V_{a,1}$};
	\node[anchor=north] at (83pt,81pt) {\scriptsize $V_{a,2}$};
	\node[anchor=center] at (183.5pt,99.5pt) {\scriptsize $t_{a,2}$};
	
	\draw[lightgray, thick] (125pt,147pt) -- (127pt,147pt) -- (127pt,168pt) -- (125pt,168pt);
	\draw[lightgray, thick] (127pt,157.5pt) -- (129pt,157.5pt);
	\node[anchor=center] at (148pt,161.5pt) {\scriptsize further target};
	\node[anchor=center] at (148pt,153.5pt) {\scriptsize nodes of $V_{a,1}$};
	
	%---
	
	\draw[gray, thick] (111pt,-65pt) -- (111pt,-68pt) -- (129pt,-68pt) -- (129pt,-65pt);
	\draw[gray, thick] (120pt,-68pt) -- (120pt,-71pt);
	\node[anchor=north] at (120pt,-71pt) {\scriptsize $N$ target and};
	\node[anchor=north] at (120pt,-79pt) {\scriptsize extra nodes in};
	\node[anchor=north] at (120pt,-87pt) {\scriptsize each group};
	
	\draw[gray, thick] (-9pt,-65pt) -- (-9pt,-68pt) -- (9pt,-68pt) -- (9pt,-65pt);
	\draw[gray, thick] (0pt,-68pt) -- (0pt,-71pt);
	\node[anchor=north] at (0pt,-71pt) {\scriptsize $N$ extra nodes};
	\node[anchor=north] at (0pt,-79pt) {\scriptsize in each group};
	
	\draw[gray, thick] (42pt,-65pt) -- (42pt,-68pt) -- (78pt,-68pt) -- (78pt,-65pt);
	\draw[gray, thick] (60pt,-68pt) -- (60pt,-71pt);
	\node[anchor=north] at (60pt,-71pt) {\scriptsize $k-N$ common};
	\node[anchor=north] at (60pt,-79pt) {\scriptsize nodes in each};
	\node[anchor=north] at (60pt,-87pt) {\scriptsize group};
	
\end{tikzpicture}
	\caption{Illustration of the input groups created for two adjacent nodes $a$ and $b$ of $G$ in Theorem \ref{th:vc}. For a simplified notation, we draw an arrow from an input group to a target node to show that all nodes in this input groups have an outgoing edge to the target node in our DAG.}
	\label{fig:VC}
\end{figure*}

\begin{figure*}
	\centering
	\begin{tikzpicture}
	
	\draw[line width=0.3pt, arrows=-stealth] (0pt,95pt) -- (0pt,70pt);
	\draw[line width=0.3pt, arrows=-stealth] (0pt,70pt) -- (-10pt,45pt);
	\draw[line width=0.3pt, arrows=-stealth] (-10pt,45pt) -- (0pt,20pt);
	\draw[line width=0.3pt, arrows=-stealth] (0pt,20pt) -- (0pt,70pt);
	
	\draw[black, fill=white] (0pt,95pt) circle (1ex);
	\draw[black, fill=white] (0pt,70pt) circle (1ex);
	\draw[black, fill=white] (-10pt,45pt) circle (1ex);
	\draw[black, fill=white] (0pt,20pt) circle (1ex);
	
	\draw[lightgray, thick, dotted] (75pt,60pt) -- (75pt,105pt) -- (155pt,105pt) -- (155pt,60pt) -- cycle;
	
	\draw[very thick, gray, arrows=-stealth] (20pt,57.5pt) -- (60pt,57.5pt);
	
	\node[anchor=south] at (40pt,62.5pt) {\scriptsize \textit{transformed}};
	\node[anchor=south] at (40pt,56.5pt) {\scriptsize \textit{to}};
	
	\draw[gray, very thick] (90pt,18.5pt) -- (140pt,18.5pt);
	\draw[gray, very thick] (90pt,21.5pt) -- (140pt,21.5pt);
	\draw[gray, very thick] (90pt,43.5pt) -- (140pt,43.5pt);
	\draw[gray, very thick] (90pt,46.5pt) -- (140pt,46.5pt);
	\draw[gray, very thick] (90pt,68.5pt) -- (140pt,68.5pt);
	\draw[gray, very thick] (90pt,71.5pt) -- (140pt,71.5pt);
	\draw[gray, very thick] (90pt,93.5pt) -- (140pt,93.5pt);
	\draw[gray, very thick] (90pt,96.5pt) -- (140pt,96.5pt);
	
	\draw[arrows=-stealth] (90pt,20pt) -- (140pt,42pt);
	\draw[arrows=-stealth] (90pt,20pt) -- (140pt,66pt);
	\draw[arrows=-stealth] (90pt,45pt) -- (140pt,22pt);
	\draw[arrows=-stealth] (90pt,45pt) -- (140pt,68pt);
	\draw[arrows=-stealth] (90pt,70pt) -- (140pt,24pt);
	\draw[arrows=-stealth] (90pt,70pt) -- (140pt,48pt);
	\draw[arrows=-stealth] (90pt,70pt) -- (140pt,92pt);
	\draw[arrows=-stealth] (90pt,95pt) -- (140pt,73pt);
	
	\draw[fill=white] (80pt,15pt) -- (80pt,25pt) -- (90pt,25pt) -- (90pt,15pt) -- cycle;
	\draw[fill=white] (80pt,40pt) -- (80pt,50pt) -- (90pt,50pt) -- (90pt,40pt) -- cycle;
	\draw[fill=white] (80pt,65pt) -- (80pt,75pt) -- (90pt,75pt) -- (90pt,65pt) -- cycle;
	\draw[fill=white] (80pt,90pt) -- (80pt,100pt) -- (90pt,100pt) -- (90pt,90pt) -- cycle;
	
	\draw[fill=white] (140pt,15pt) -- (140pt,25pt) -- (150pt,25pt) -- (150pt,15pt) -- cycle;
	\draw[fill=white] (140pt,40pt) -- (140pt,50pt) -- (150pt,50pt) -- (150pt,40pt) -- cycle;
	\draw[fill=white] (140pt,65pt) -- (140pt,75pt) -- (150pt,75pt) -- (150pt,65pt) -- cycle;
	\draw[fill=white] (140pt,90pt) -- (140pt,100pt) -- (150pt,100pt) -- (150pt,90pt) -- cycle;
	
	\draw[gray, thick] (76pt,7pt) -- (76pt,5pt) -- (94pt,5pt) -- (94pt,7pt);
	\draw[gray, thick] (85pt,3pt) -- (85pt,5pt);
	\node[anchor=north] at (85pt,5pt) {\scriptsize \textit{first-level}};
	\node[anchor=north] at (85pt,-3pt) {\scriptsize \textit{groups}};
	
	\draw[gray, thick] (136pt,7pt) -- (136pt,5pt) -- (154pt,5pt) -- (154pt,7pt);
	\draw[gray, thick] (145pt,3pt) -- (145pt,5pt);
	\node[anchor=north] at (145pt,5pt) {\scriptsize \textit{second-level}};
	\node[anchor=north] at (145pt,-3pt) {\scriptsize \textit{groups}};

\end{tikzpicture}
	\caption{DAG construction corresponding to an example graph $G$ in Theorem \ref{th:vc}. For a simplified notation, we only illustrate the input groups of the DAG, denoted by squares in the figure. We use arrows to show dependencies (i.e. a group $V_{a,1}$ has to be visited before group $V_{b,2}$ because $V_{b,2}$ contains a target node of $V_{a,1}$) between input groups, and we use double gray lines to show that two groups are almost identical (i.e. they have a large intersection of common nodes). Since the only relevance of target nodes is that they force the algorithm to visit an input group, they are not shown explicitly in this simplified figure. The dotted rectangle corresponds to the DAG shown in Figure \ref{fig:VC}.}
	\label{fig:VC_transform}
\end{figure*}

As before, the input groups on the second level will only have one target node $t_{a,2}$. However, each input group on the first level will have not only one, but $N-1$ distinct target nodes. Each of these target nodes will correspond to some other node $b$ of $G$ (i.e. $b \neq a$); we denote the target node of group $V_{a,1}$ corresponding to $b$ by $t_{a,1,b}$.

For each edge $(a,b)$ of $G$, the second-level group of $a$ in the construction includes the corresponding target node of the first-level group of $b$, i.e. $t_{a,1,b} \in V_{b,2}$. This ensures that any pebbling algorithm has to visit $V_{a,1}$ before visiting $V_{b,2}$ to compute this target node. On the other hand, if $a$ and $b$ are not neighbors in $G$, then $t_{a,1,b}$ is just a sink node that is not included in any input group.

Furthermore, for each node $a$, we ensure that most nodes of the first and second-level input groups of $a$ coincide. That is, we create $k-N$ so-called \textit{common nodes} for each $a$, and we include these nodes in both $V_{a,1}$ and $V_{a,2}$. Recall that $k=\omega(N^2)$, so these common nodes dominate the input groups asymptotically. Besides the common nodes, we have already inserted up to $N-1$ target nodes in the second-level input groups (depending on the degree of $a$), so at this point, all input groups have a cardinality between $k-N$ and $k-1$. We simply fill up each input group (on both levels) with distinct extra nodes to reach a cardinality of $k$.

The construction is illustrated on Figures \ref{fig:VC} and \ref{fig:VC_transform}. Since each input group of the construction is of size $k$, we study this DAG with a choice of $R=k+1$.

The base idea of the construction is as follows. In this DAG, every (first- or second-level) input group has to be visited at least once. However, the second-level group of $a$ can only be visited after the first-level groups of all neighbors $b$ of $a$ have been visited, since a target of $V_{b,1}$ is included in the second-level group $V_{a,2}$. With $k$ much larger than $N$, the groups $V_{a,1}$ and $V_{a,2}$ are almost identical (consisting mostly of the common nodes), and thus a pebbling algorithm can spare a lot of cost by visiting $V_{a,1}$ and $V_{a,2}$ consecutively. In fact, the remaining nodes are asymptotically irrelevant, so the total cost will be determined by the number of nodes $a$ for which we can visit $V_{a,1}$ and $V_{a,2}$ consecutively. 

Whenever we visit $V_{a,1}$ and $V_{a,2}$ consecutively, we can simply compute the common nodes of $a$ first (they are source nodes, so this happens for free), visit both groups, and then delete the red pebbles from these common nodes, which is also free.

On the other hand, if the visits of $V_{a,1}$ and $V_{a,2}$ are not consecutive, then we first have to compute their common nodes when visiting $V_{a,1}$, and then move the red pebbles away from these nodes. As we still need to visit these common nodes for $V_{a,2}$, and recomputation is not possible in the \textsc{oneshot} model, we have to turn all these common nodes blue, and then turn them back to red later when visiting $V_{a,2}$ (after which the red pebbles can again be deleted for free). Thus if the groups $V_{a,1}$ and $V_{a,2}$ are visited non-consecutively, then each of the $k-N$ common nodes in the groups incurs a cost of at least 2.

Hence pebbling the DAG comes down to the task of finding the largest set of nodes in $G$ for which the two groups can be visited consecutively. Recall that for a consecutive visit of $V_{a,1}$ and $V_{a,2}$, the first-level group of each neighbor of $a$ has to have been visited earlier. The optimal strategy thus is to visit the first-level groups of a small vertex cover $VC$ in $G$, then visit both groups of each node in the remaining large independent set $IS$ consecutively, and in the end visit the second-level groups of nodes in $VC$. The common nodes in $IS$ will then incur no cost, and thus the cost will be in the magnitude of $2 k \cdot |VC|$, due to the transfer operations on the common nodes of the input groups that are in $VC$. This cost is proportional to $|VC|$ (apart from a negligible $O(N^2)$ factor), so the cost of a pebbling directly corresponds to the size of the vertex cover defined by the non-consecutively visited groups. This shows that approximating the optimal pebbling cost to a $\delta$ factor in this construction also provides a $\delta$-approximation to the Vertex Cover problem in $G$. \qedhere
\end{proof}

There is no straightforward way to apply this proof to the rest of our models. In \textsc{nodel}, red pebbles from common nodes cannot be deleted, so common nodes incur a cost of $k$ even in the consecutive case; thus cost is proportional to $k \cdot (2 |VC| + N)$. In the \textsc{base} model, common nodes are either recomputable for free again (thus there is essentially no cost at all), or they are costly to compute in the first place, which also adds a term of $k \cdot N$ to the cost. The \textsc{compcost} model has the same problem as \textsc{base}, with a further cost of $\epsilon \cdot N^2$ added due to the computations.

\section{Inefficiency of greedy algorithms} \label{sec:greedy}

In the \textsc{oneshot} model, each node of the DAG can be computed only once. Hence, any pebbling strategy is essentially described by the (topological) order in which we decide to compute the nodes of the DAG (and besides this, our method to decide which red pebbles to take away from other nodes for this computation).

There are some straightforward greedy heuristics to define this ordering, i.e. to select the next node to compute in each step; it is natural to ask if such algorithms provide close-to-optimal solutions. Such greedy approaches could, for example, always select the node:
\begin{itemize}
	\setlength\itemsep{0.4ex}
	\item with the largest number of red pebbles among its inputs,
	\item with the smallest number of blue pebbles among its inputs,
	\item with the largest red pebbles to inputs ratio.
\end{itemize}

In all these approaches, the greedy choice always happens from the set of (yet uncomputed) nodes whose inputs have already all been computed, since these are the only candidates for the next node to compute. Note that these greedy methods only choose the next node to compute, but do not specify which red pebbles to move to its inputs; our examples will show that these algorithms are inefficient regardless of how the red pebbles are chosen.

This section shows that such greedy approaches can yield much higher cost than $\texttt{opt}(R)$. As previously, our constructions will consist of input groups of the same size $k$, and hence, each (non-source) node has the same indegree. For such graphs, the previous greedy approaches are all identical, so one example is enough to disprove the efficiency of all.

\renewcommand{\proofname}{Proof.}
\begin{theorem} \label{th:greedy}
In the \textsc{oneshot} model, there is a class of graphs and a choice of $R$ such that $\textsc{cost}_{\text{greedy}}(R) \geq \widetilde{\Theta}(\sqrt{n}) \cdot \textsc{opt}(R)$.
\end{theorem}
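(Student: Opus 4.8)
The plan is to build on the Vertex Cover construction of Theorem~\ref{th:vc}, but in a stripped-down form and instantiated on a graph $G$ whose minimum vertex cover has constant size. Concretely, take $G$ to be the star $K_{1,N-1}$ with center $c$ and leaves $\ell_1,\dots,\ell_{N-1}$, and create, for every vertex $a$, a first-level input group $V_{a,1}$ and a second-level input group $V_{a,2}$, each of size exactly $k$ and sharing a block of about $k$ \emph{common} source nodes. Keep only the target nodes strictly needed to enforce dependencies (for each edge $(a,b)$, place a target of $V_{a,1}$ inside $V_{b,2}$, so $V_{a,1}$ must be visited before $V_{b,2}$); then, apart from the common blocks, the construction has only $O(N)$ further nodes and $n=\Theta(Nk)$ in total. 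Finally add one extra source node $w$, and put $w$ into every leaf group $V_{\ell_i,1}$ but \emph{not} into $V_{c,1}$. Setting $R=k+1$, every target needs all red pebbles, so (exactly as in Theorems~\ref{th:hampath}--\ref{th:vc}) a pebbling is described by the order in which the $2N$ groups are visited, and its cost equals, up to an $O(N)$ additive term, twice the total size of the common blocks whose two groups are visited non-consecutively.

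\textbf{Bounding the optimum.} Here I exhibit the order ``$V_{c,1}$; then $V_{\ell_1,1},V_{\ell_1,2}$; then $V_{\ell_2,1},V_{\ell_2,2}$; \dots; then $V_{c,2}$''. Each leaf's two groups are consecutive, so its common block can be computed for free (the common nodes are fresh sources), kept red across both visits, and then deleted for free; only the center's common block is split, and the remaining edge-target and sink operations sum to $O(N)$. Hence $\textsc{opt}(R)\le 2k+O(N)=O(k)$ as soon as $k=\omega(N)$.

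\textbf{Bounding greedy.} All three greedy variants coincide here since every non-source node has indegree $k$, and the greedy rule amounts to visiting next the group with the most currently-red inputs. The seed $w$ drives greedy off course: once $w$ is red, every $V_{\ell_i,1}$ has a red input while $V_{c,1}$ has none, so greedy is \emph{forced} to visit all $N-1$ leaf first-level groups before $V_{c,1}$; but each $V_{\ell_i,2}$ is still locked (it depends on $V_{c,1}$), so greedy cannot pair a leaf up and must instead turn that leaf's common block blue before moving on. By the time $V_{c,1}$, and hence all the $V_{\ell_i,2}$, become available, every one of the $N$ common blocks has been split, and (because recomputation is forbidden in \textsc{oneshot}) each must be turned blue and later red again. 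A short calculation then gives $\textsc{cost}_{\text{greedy}}(R)=\Omega(kN)$, and this holds regardless of how ties are broken, since $w$ makes every critical greedy choice strict rather than a tie.

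\textbf{Conclusion and the main obstacle.} Combining the bounds, $\textsc{cost}_{\text{greedy}}(R)/\textsc{opt}(R)=\Omega(N)$; choosing $k=\Theta(\sqrt n\cdot\mathrm{polylog}\,n)$ so that $N=\Theta(n/k)=\Theta(\sqrt n/\mathrm{polylog}\,n)$ yields the claimed ratio $\widetilde{\Theta}(\sqrt n)$, while the matching upper side follows since greedy's cost never exceeds the $O(\Delta\cdot n)$ bound of Lemma~\ref{lem:lengthbound}; the constant-indegree version follows from the transformation gadget of Section~\ref{sec:basics}. The step I expect to be the crux is making the greedy lower bound airtight: one has to verify that at \emph{every} decision point greedy strictly prefers a leaf first-level group over $V_{c,1}$ and over every (locked or unlocked) second-level group, for all three greedy heuristics at once, which means calibrating the sizes of the common blocks, of the seed, and of the private filler nodes so that no accidental tie --- and no clever choice of which red pebbles to move --- lets greedy pair a leaf's two groups together; verifying that the ``free source'' computations and deletions really do let the optimum hold each common block red for precisely the right interval is the other place where care is needed.
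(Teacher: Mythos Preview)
Your approach differs from the paper's: instead of a triangular grid of $\Theta(\ell^2)$ input groups whose diagonals share common nodes, you specialize the Vertex Cover reduction to the star $K_{1,N-1}$ and use a single seed source $w$ to misguide greedy. If it worked it would be more direct, and its best achievable ratio $\widetilde\Theta(\sqrt n)$ (attained at $k\approx N\approx\sqrt n$, since your optimum is $\Theta(k+N)$ while greedy is $\Theta(Nk)$) matches the stated theorem; the paper's grid in fact attains the stronger ratio $\widetilde\Theta(n)$ in the unbounded-degree case and only drops to $\widetilde\Theta(\sqrt n)$ after the constant-indegree transformation.

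There is, however, a genuine gap exactly where you anticipated it, and it is not just a matter of calibration. The paper's greedy rule selects the next node only from those whose inputs are \emph{all already computed}. In your DAG every first-level group $V_{a,1}$, including $V_{c,1}$, consists entirely of sources, so no first-level target is a candidate until its $k$ source inputs have been computed---and deciding which sources to compute first is a pure tie (sources have no inputs, so all three heuristics score them identically). Nothing prevents greedy from computing the $k$ sources of $V_{c,1}$ first, then all center targets $t_{c,1,\ell_i}$; from that point every leaf second-level group is unlocked and greedy can pair each leaf consecutively, achieving cost $O(k)$ and matching the optimum. The seed $w$ only creates a preference \emph{after} some leaf first-level group has already been entered; it does nothing to break the initial tie, so your claim that it ``makes every critical greedy choice strict'' is false at the very first step.

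The paper avoids this with a structural device you lack: a starter group $S_0$ whose targets are placed inside every bottom group of the grid, so that $S_0$'s target is the \emph{unique} first target that can ever become enabled, regardless of source tie-breaking; a small overlap between $S_0$ and group $(\ell,1)$ then deterministically steers greedy into the bad column order. To repair your construction you would need an analogous root group---one whose targets sit inside every $V_{a,1}$, with a small intersection with some $V_{\ell_i,1}$ but not with $V_{c,1}$---so that the first enabled target is forced and the subsequent bias from $w$ actually takes hold.
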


\begin{proof}

Our construction is in many ways similar to the one in the previous section: we create many pairs (in fact, even chains) of input groups with a large number of common nodes, which should therefore be visited consecutively in any optimal solution. However, we also create dependencies between these groups (i.e., include the target nodes of a group in another group) so that some have to be visited earlier than others. If a greedy approach follows another ordering and does not compute these target nodes in time, then it is unable to follow these consecutive chains and thus returns a solution with much higher cost.

Let us introduce a parameter $\ell$, the value of which will be chosen later. Our construction consists of $\binom{\ell+1}{2}$ input groups, aligned along a grid in the positions $(i, j)$ satisfying $1 \leq i, j \leq \ell$ and $i+j \leq \ell+1$, as shown in Figure \ref{fig:greedy}. As in our previous examples, each of these input groups have the same size, denoted by $k$. Along each diagonal (i.e. the groups with $i+j=x$ for some specific $x$), the input groups are essentially the same, i.e. they all consist of the same $k'$ common nodes, with $k' \! \approx \! k$. We create exactly one target node for each of these input groups, denoted by $t_{i,j}$ for the group at position $(i, j)$.

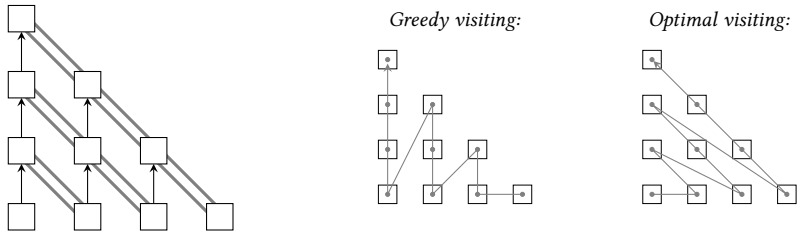
\begin{figure*}
	\centering
	\begin{tikzpicture}
	
	\draw[arrows=-stealth] (5pt,10pt) -- (5pt,25pt);
	\draw[arrows=-stealth] (5pt,35pt) -- (5pt,50pt);
	\draw[arrows=-stealth] (5pt,60pt) -- (5pt,75pt);
	
	\draw[arrows=-stealth] (30pt,10pt) -- (30pt,25pt);
	\draw[arrows=-stealth] (30pt,35pt) -- (30pt,50pt);
	
	\draw[arrows=-stealth] (55pt,10pt) -- (55pt,25pt);
	
	\draw[gray, very thick] (31.5pt,6.5pt) -- (6.5pt,31.5pt);
	\draw[gray, very thick] (28.5pt,3.5pt) -- (3.5pt,28.5pt);
	
	\draw[gray, very thick] (56.5pt,6.5pt) -- (6.5pt,56.5pt);
	\draw[gray, very thick] (53.5pt,3.5pt) -- (3.5pt,53.5pt);
	
	\draw[gray, very thick] (81.5pt,6.5pt) -- (6.5pt,81.5pt);
	\draw[gray, very thick] (78.5pt,3.5pt) -- (3.5pt,78.5pt);
	
	\draw[fill=white] (0pt,0pt) -- (0pt,10pt) -- (10pt,10pt) -- (10pt,0pt) -- cycle;
	\draw[fill=white] (25pt,0pt) -- (25pt,10pt) -- (35pt,10pt) -- (35pt,0pt) -- cycle;
	\draw[fill=white] (50pt,0pt) -- (50pt,10pt) -- (60pt,10pt) -- (60pt,0pt) -- cycle;
	\draw[fill=white] (75pt,0pt) -- (75pt,10pt) -- (85pt,10pt) -- (85pt,0pt) -- cycle;
	
	\draw[fill=white] (0pt,25pt) -- (0pt,35pt) -- (10pt,35pt) -- (10pt,25pt) -- cycle;
	\draw[fill=white] (25pt,25pt) -- (25pt,35pt) -- (35pt,35pt) -- (35pt,25pt) -- cycle;
	\draw[fill=white] (50pt,25pt) -- (50pt,35pt) -- (60pt,35pt) -- (60pt,25pt) -- cycle;
	
	\draw[fill=white] (0pt,50pt) -- (0pt,60pt) -- (10pt,60pt) -- (10pt,50pt) -- cycle;
	\draw[fill=white] (25pt,50pt) -- (25pt,60pt) -- (35pt,60pt) -- (35pt,50pt) -- cycle;
	
	\draw[fill=white] (0pt,75pt) -- (0pt,85pt) -- (10pt,85pt) -- (10pt,75pt) -- cycle;
	
	\draw[fill=white, line width=0.25pt] (140pt,10pt) -- (140pt,17pt) -- (147pt,17pt) -- (147pt,10pt) -- cycle;
	\draw[fill=white, line width=0.25pt] (157pt,10pt) -- (157pt,17pt) -- (164pt,17pt) -- (164pt,10pt) -- cycle;
	\draw[fill=white, line width=0.25pt] (174pt,10pt) -- (174pt,17pt) -- (181pt,17pt) -- (181pt,10pt) -- cycle;
	\draw[fill=white, line width=0.25pt] (191pt,10pt) -- (191pt,17pt) -- (198pt,17pt) -- (198pt,10pt) -- cycle;
	
	\draw[fill=white, line width=0.25pt] (140pt,27pt) -- (140pt,34pt) -- (147pt,34pt) -- (147pt,27pt) -- cycle;
	\draw[fill=white, line width=0.25pt] (157pt,27pt) -- (157pt,34pt) -- (164pt,34pt) -- (164pt,27pt) -- cycle;
	\draw[fill=white, line width=0.25pt] (174pt,27pt) -- (174pt,34pt) -- (181pt,34pt) -- (181pt,27pt) -- cycle;
	
	\draw[fill=white, line width=0.25pt] (140pt,44pt) -- (140pt,51pt) -- (147pt,51pt) -- (147pt,44pt) -- cycle;
	\draw[fill=white, line width=0.25pt] (157pt,44pt) -- (157pt,51pt) -- (164pt,51pt) -- (164pt,44pt) -- cycle;
	
	\draw[fill=white, line width=0.25pt] (140pt,61pt) -- (140pt,68pt) -- (147pt,68pt) -- (147pt,61pt) -- cycle;
	
	\node[anchor=south] at (169pt,72pt) {\small \textit{Greedy visiting:}};
	
	\draw[fill=white, line width=0.25pt] (240pt,10pt) -- (240pt,17pt) -- (247pt,17pt) -- (247pt,10pt) -- cycle;
	\draw[fill=white, line width=0.25pt] (257pt,10pt) -- (257pt,17pt) -- (264pt,17pt) -- (264pt,10pt) -- cycle;
	\draw[fill=white, line width=0.25pt] (274pt,10pt) -- (274pt,17pt) -- (281pt,17pt) -- (281pt,10pt) -- cycle;
	\draw[fill=white, line width=0.25pt] (291pt,10pt) -- (291pt,17pt) -- (298pt,17pt) -- (298pt,10pt) -- cycle;
	
	\draw[fill=white, line width=0.25pt] (240pt,27pt) -- (240pt,34pt) -- (247pt,34pt) -- (247pt,27pt) -- cycle;
	\draw[fill=white, line width=0.25pt] (257pt,27pt) -- (257pt,34pt) -- (264pt,34pt) -- (264pt,27pt) -- cycle;
	\draw[fill=white, line width=0.25pt] (274pt,27pt) -- (274pt,34pt) -- (281pt,34pt) -- (281pt,27pt) -- cycle;
	
	\draw[fill=white, line width=0.25pt] (240pt,44pt) -- (240pt,51pt) -- (247pt,51pt) -- (247pt,44pt) -- cycle;
	\draw[fill=white, line width=0.25pt] (257pt,44pt) -- (257pt,51pt) -- (264pt,51pt) -- (264pt,44pt) -- cycle;
	
	\draw[fill=white, line width=0.25pt] (240pt,61pt) -- (240pt,68pt) -- (247pt,68pt) -- (247pt,61pt) -- cycle;
	
	\node[anchor=south] at (269pt,72pt) {\small \textit{Optimal visiting:}};
	
	\draw[gray, fill=gray] (143.5pt,13.5pt) circle (0.2ex);
	\draw[gray, fill=gray] (143.5pt,30.5pt) circle (0.2ex);
	\draw[gray, fill=gray] (143.5pt,47.5pt) circle (0.2ex);
	\draw[gray, fill=gray] (143.5pt,64.5pt) circle (0.2ex);
	\draw[gray, fill=gray] (160.5pt,13.5pt) circle (0.2ex);
	\draw[gray, fill=gray] (160.5pt,30.5pt) circle (0.2ex);
	\draw[gray, fill=gray] (160.5pt,47.5pt) circle (0.2ex);
	\draw[gray, fill=gray] (177.5pt,13.5pt) circle (0.2ex);
	\draw[gray, fill=gray] (177.5pt,30.5pt) circle (0.2ex);
	\draw[gray, fill=gray] (194.5pt,13.5pt) circle (0.2ex);
	
	\draw[gray, fill=gray] (243.5pt,13.5pt) circle (0.2ex);
	\draw[gray, fill=gray] (243.5pt,30.5pt) circle (0.2ex);
	\draw[gray, fill=gray] (243.5pt,47.5pt) circle (0.2ex);
	\draw[gray, fill=gray] (243.5pt,64.5pt) circle (0.2ex);
	\draw[gray, fill=gray] (260.5pt,13.5pt) circle (0.2ex);
	\draw[gray, fill=gray] (260.5pt,30.5pt) circle (0.2ex);
	\draw[gray, fill=gray] (260.5pt,47.5pt) circle (0.2ex);
	\draw[gray, fill=gray] (277.5pt,13.5pt) circle (0.2ex);
	\draw[gray, fill=gray] (277.5pt,30.5pt) circle (0.2ex);
	\draw[gray, fill=gray] (294.5pt,13.5pt) circle (0.2ex);
	
	\draw[gray, arrows=-stealth] (243.5pt,13.5pt) -- (260.5pt,13.5pt) -- (243.5pt,30.5pt) -- (277.5pt,13.5pt) -- (243.5pt,47.5pt)  -- (294.5pt,13.5pt)  -- (244pt,64pt);
	
	\draw[gray, arrows=-stealth] (194.5pt,13.5pt) -- (177.5pt,13.5pt) -- (177.5pt,30.5pt) -- (160.5pt,13.5pt) -- (160.5pt,47.5pt)  -- (143.5pt,13.5pt)  -- (143.5pt,63pt);

\end{tikzpicture}
	\caption{Construction for Theorem \ref{th:greedy}: grid of input groups for $\ell=4$, and their greedy and optimum visitation order. We apply the same notation as in Figure \ref{fig:VC_transform}.}
	\label{fig:greedy}
\end{figure*}

To generate dependencies between the groups, we also include the target node $t_{i,j}$ of each group $(i,j)$ in the group $(i, j+1)$, i.e. the group immediately above if such a group exists. This ensures that any algorithm has to visit group $(i,j)$ before visiting group $(i, j+1)$, since the target node of group $(i,j)$ is required in order to compute the target node of group $(i, j+1)$.

Furthermore, we add a few more extra nodes in order to misguide the greedy heuristic. For each $j$, we create a small intersection between the uppermost group of column $j$ and the lowermost group of column $j\!-\!1$ (i.e. groups $(j, \ell+1-j)$ and $(j-1, 1)$ for $j \in \{2, ..., \ell \}$).

Note that both for dependencies and misguidance, we only add a small constant number of nodes to each input group. Our choice of $k'$ will ensure that $k'$ is in a much larger magnitude, so each input group is still asymptotically dominated by the common nodes in the group. Also, the addition of these extra nodes leads to input groups of slightly different size; to fix this, we simply add an appropriate number of distinct extra nodes to each group in order to ensure that they all have the same cardinality $k=k'+O(1)$.

Finally, we need a technical step to ensure that the greedy approach enters this grid by visiting group $(\ell, 1)$ first. We add an extra input group $S_0$ below the whole grid, and create multiple target nodes for this group. We place a distinct target node of $S_0$ into each of the bottom groups $(i,1)$ of the grid. This way, any valid pebbling can only begin by visiting $S_0$, since the bottom groups have a dependence on $S_0$, and all other groups in the grid have a (direct or indirect) dependence on a bottom node. We then create a small intersection (of constantly many nodes) between $S_0$ and group $(\ell, 1)$ of the grid. This ensures that after visiting $S_0$, the greedy approach continues at group $(\ell, 1)$, since some of the inputs of group $(\ell, 1)$ already have a red pebble due to this intersection. Hence after computing all the target nodes of $S_0$, the greedy approach continues by visiting group $(\ell,1)$ of the grid.

Let us now analyze the progress of the greedy method in the grid. After $(\ell, 1)$, a greedy algorithm can only visit another bottom node next, since all other groups have an (indirect) dependency on the bottom group in the respective column. That is, even though group $(\ell, 1)$ has a large intersection with group $(\ell-1, 2)$, this group $(\ell-1, 2)$ also contains a target node of group $(\ell-1, 1)$, which is not computed yet. Thus the greedy algorithm has to choose among the bottom groups $(i,1)$; due to the small intersection we added, the algorithm chooses $(\ell-1, 1)$ next. At this point, the only available groups for visitation are the further bottom groups $(i,1)$ and the group $(\ell-1, 2)$ above the current position. Since no bottom group has a red pebble on any of its nodes, but group $(\ell-1, 2)$ contains the target node $t_{\ell-1, 1}$ which currently has a red pebble, the algorithm goes above to $(\ell-1, 2)$ next. Then the red pebbles are once again in an input group at the top of a column, so as before, the most attractive next move is to go to the next bottom node $(\ell-2, 1)$.

Following the small intersections sets in our design, the greedy method will continue in a similar fashion, visiting the columns from right to left, and processing each column in bottom-to-top direction. Whenever the algorithm is at the top of a column, the misguidance nodes ensure that it selects the bottom group of the next column in the following step, and whenever it is within a column, the only already enabled input group that has a red pebble on one of its nodes is the group immediately above. At every group $(i, j)$ in the process, moving up along the diagonal to $(i-1,j+1)$ would be a much more attractive move, but the algorithm cannot execute this because group $(i-1, j)$ is still unvisited at this point, so one of the nodes in group $(i-1,j+1)$ is not yet computed. Thus, without a deeper understanding of the dependencies between the groups, the greedy method follows a suboptimal ordering.

In contrast to this, after visiting $S_0$, an optimum pebbling could instead begin with group $(1, 1)$ first, then $(2, 1)$ and $(1,2)$, followed by $(3, 1)$, $(2,2)$ and $(1,3)$, and so on, always selecting the next bottom node $(i, 1)$ with $i$ increasing, and then processing the diagonal from this node up to $(1,i)$. This visitation order ensures that all groups of the diagonal in question are enabled by the time the algorithm reaches group $(i, 1)$.

If $k$ is large and $k'$ is only slightly smaller, then the cost of the pebbling is determined by the cost of revisiting the $k'$ common nodes in the groups. We have $\ell$ large sets of common nodes, corresponding to the $\ell$ diagonals. The greedy algorithm visits these common nodes again and again: the nodes in diagonal $i + j = \ell+1$ are visited $\ell$ times (once for each group in the diagonal), the nodes in diagonal $i + j = \ell$ are visited $\ell - 1$ times, and so on. Since all red pebbles are needed for every visitation, the greedy algorithm must turn these common nodes red and then blue each time when visiting the diagonal. As a technical detail, note that the first and last visitations are somewhat of an exception to this: the first time when the common nodes are visited, they obtain a red pebble free of charge since they are computed, and at the last visitation, the red pebbles from the nodes can be deleted free of charge. Due to this, the $\ell$ visitations of the common nodes in diagonal $i + j = \ell+1$ for example only require these transfer operations $\ell-1$ times instead of $\ell$; however, this is asymptotically irrelevant. Altogether, these transfer operations mean that the greedy algorithm will incur a cost of $2 k' \cdot (\ell\!-\!1 + \ell\! -\!2 + ...) = 2 k' \cdot \Theta(\ell^2)$.

On the other hand, an optimal solution simply computes the common nodes when visiting group $(i, 1)$, and deletes them after leaving the diagonal, doing both for free. Hence, the common nodes incur no cost in this optimal pebbling, and the cost of the pebbling is only determined by the operations on the remaining few nodes in each input group. Since there are only $k - k'$ such nodes in each of the $\binom{\ell+1}{2}$ input groups, and we only execute constantly many operations on each of them, the cost of the optimal pebbling is only in the magnitude of $(k - k') \cdot \Theta(\ell^2)$.

Let us now choose our parameters: we choose $\ell=\omega(1)$ as a slowly growing function (e.g. $\ell=\log{n}$), let $k=\widetilde{\Theta}(n)$, and let $k'=k-O(1)$ for a sufficiently large constant to allow the previously discussed extra nodes in each group. Since we have $O(k' \cdot \ell)$ common nodes and $O((k-k') \cdot \ell^2)$ extra nodes in the construction, the overall number of nodes is indeed less than $n$ if we choose $k=\Theta\left(\frac{n}{\ell}\right)$ with the appropriate constant.

This means that the optimum algorithm has a cost of $O(\log^2{n})$ in our construction, while the greedy algorithm incurs a cost of $\widetilde{\Theta}(n)$. Altogether, this provides a factor of $\widetilde{\Theta}(n)$ difference between the greedy solution and the optimum.

However, we note that our transformation to reduce $\Delta$ to a constant requires a slightly different choice of parameters in our construction; this reduces the difference to a factor of $\widetilde{\Theta}(\sqrt{n})$ in the constant-degree case. \qedhere
\end{proof}

Since the remaining models allow recomputation, defining the same greedy algorithm is already not a straightforward task in these models: an algorithm might decide to compute a node multiple times during a pebbling. We discuss the interpretation of the greedy rule in these remaining models in more detail in Appendix \ref{App:A:Gr}. With a slightly refined definition of the greedy algorithm, we can also adapt our construction to the remaining models.

In the \textsc{base} model, we need to add a H2C gadget to the construction to ensure that common nodes cannot be recomputed for free. However, this means that the first computation of the common nodes will also incur a considerable cost in the optimum case, which reduces the difference between the greedy and the optimum pebbling to a $\Theta(n^{1/3})$ factor, or a $\Theta(n^{1/6})$ factor when restricted to $\Delta=O(1)$.
In \textsc{nodel} and \textsc{compcost}, recall that the cost of two pebblings can only differ by a constant factor; in these cases, we show that this difference can become an arbitrarily high constant. The details of these modifications are discussed in Appendix \ref{App:A:Gr}.

\bibliographystyle{ACM-Reference-Format}
\bibliography{references}

\appendix

\section{More details on the constructions} \label{App:A}

\subsection{The tradeoff diagram in other models} \label{App:A:tradeoff}

It is rather straightforward to see that in the construction of Figure \ref{fig:tradeoff_graph}, the tradeoff function is also essentially the same for other models. In the \textsc{nodel} model, the only difference is the fact that chain nodes are never deleted, but turned to blue, which simply adds an offset of $n$ to the function. Similarly, the computation of the chain nodes adds an offset of $\epsilon \cdot n$ in the \textsc{compcost} case.

The only slightly more involved step is the addition of the H2C gadget, which is required in the \textsc{base} and \textsc{compcost} models. Note that the H2C gadget in general was designed to work for a specific $R$ value; however, as opposed to the other constructions in the paper, the tradeoff graph is a construction that we want to study for multiple $R$ values. Nonetheless, an easy modification allows us to adapt the H2C gadget for this case. Let the gadget consist of $d+1$ nodes in group $B$ as before, but let us now add $d+3$ starter nodes for each source of the DAG (instead of adding only 3, as we do in every other application of the gadget). This ensures that the H2C gadget can still be pebbled even with the minimal $R=d+2$, but since it has many target nodes, we still have to save at least $2$ of these into slow memory, even if $R=2d+2$ (i.e. the maximal $R$ we consider). For any $R<2d+2$, this will only mean that computing the control group nodes is slightly more costly, but this still only adds an overhead of $2d \cdot O(1)$ to the total cost.

Note that in models allowing recomputation, it is not a straightforward fact anymore that the cost can only decrease by $2n$ with the addition of each new red pebble. Hence in other models, it might also be possible to generate tradeoff functions that decrease more quickly than the maximal rate in the \textsc{oneshot} model.

\subsection{NP-hardness (Hamiltonian Path reduction) in other models} \label{App:A:NP}

The reduction in the \textsc{nodel} model has already been described in Section \ref{sec:NP}. We now describe the modifications required for the remaining models.

Compared to \textsc{nodel}, the \textsc{oneshot} model only requires us to modify the maximal allowed cost. This consists of two different steps. Firstly, note that in the \textsc{oneshot} model, moving from one input group to another does not have a cost of $N$ or $N-1$ anymore, but a cost of $2N$ or $2(N-1)$: since overriding the source nodes through recomputation is not possible, we also have to apply a transfer operation to make each node in the next input group red, so moving each red pebble has a cost of 2 instead of 1. Thus, the allowed cost of the pebbling should be modified to $(N-1) \cdot (2N-1)$.

Besides this, note that in \textsc{nodel}, we had to turn almost all the $N \cdot (N-1) - M$ source nodes blue by the time of computing the final target node (except for the input group visited last, where we can leave $N-1$ red pebbles when finishing the pebbling). In the \textsc{oneshot} model, the red pebble can be removed from each of these source nodes after they have been used for the last time. This reduces the cost by $N \cdot (N-1) - M - (N-1) = (N-1)^2 - M$. The target nodes, on the other hand, still have to receive a blue pebble while the remaining targets are being computed, so managing them has the same cost as in the \textsc{nodel} model. Altogether, this results in an allowed cost of $(N-1) \cdot (2N-1) - (N-1)^2 + M = (N-1) \cdot N + M$.

In the \textsc{base} and \textsc{compcost} models, we have to add a H2C gadget to disable the recomputation of source nodes. For simpler analysis, we now use the H2C gadget slightly differently than in other cases: besides the starter nodes, we also instantiate the group $B$ separately for each source node of the construction. With this, the computation of each source node becomes a separate process, each requiring all red pebbles and having a cost of exactly 4 in the base model (and a cost of exactly $4+(R+4) \cdot \epsilon$ in \textsc{compcost}), independently of other source nodes. Hence, we can compute each source node exactly once, and regardless of when it is done, the allowed cost in the reduction has to be increased by $4 \cdot (N \cdot (N-1) - M)$ in the \textsc{base} model, and $(4+(R+4) \cdot \epsilon) \cdot (N \cdot (N-1) - M)$ in \textsc{compcost}.

After the H2C gadgets are added, the reductions work the same way in these models as in the \textsc{oneshot} model, since we have the same DAG with the free recomputation of source nodes disabled. In \textsc{compcost}, the computation of each target nodes requires an extra cost of $\epsilon$ to further add to the cost parameters. Hence, with the appropriately defined cost limit, the reduction from Hamiltonian Path works in each of the models.

\subsection{Details of the reduction from Vertex Cover} \label{App:A:VC}

In the reduction from Vertex Cover, let us first discuss the cost of a pebbling in more detail.

Target nodes of input groups in the first level are first computed (when visiting the input group), then possibly turned to blue once and later back to red (if the group is not followed by the second-level input group that includes the target node), and deleted after the second-level input group is visited. Thus, such nodes incur a cost of at most 2; with $N \cdot (N-1)$ such target nodes altogether, this sums up to a cost of at most $2 \cdot N \cdot (N-1)$.

Target nodes of input groups in the second level are only computed once, and possibly turned to blue (in order to free the red pebble on them). With $N$ such target nodes, this amounts to a cost of at most $N$.

Filler nodes (inserted to increase the cardinality of each input group to $k$) are only computed once, and then can be deleted, so they incur no cost.

Finally the common nodes, as discussed earlier, are computed when the first-level group is visited, possibly turned to blue and then back to red (if the visiting of the two groups does not happen consecutively), and then deleted for free. Thus depending on the number of consecutive visits, the common nodes incur a cost of $|VC| \cdot k'$, where $k'$ denotes the number of common nodes in an input group (i.e. $k'=k-N$).

Note that these describe the maximal number of transfer moves that any reasonable pebbling would execute on the nodes in question. If an approximation algorithm executes some additional transfer moves on some nodes, then it can be further simplified to obtain an approximation algorithm with even lower cost. Thus, we can assume that our approximation algorithm only executes these moves. As such, the algorithm has a total cost of $|VC| \cdot k' + O(N^2)$. Thus by choosing, e.g., $k'=\Theta(N^3)$, the cost of the pebbling is indeed determined by the size of the vertex cover found.

Formally, the proof of inapproximability is as follows. In any valid pebbling, each input group is visited at least once, since all target nodes have to be computed at some point, and all red pebbles are required to compute the target nodes of an input group. Given a pebbling, consider the sequence obtained by only taking the first visitation of each input group. If $a$ and $b$ are neighboring nodes of $G$, then the first-level group of $a$ precedes the second-level group of $b$ in this sequence. Furthermore, we can assume that the first-level group of node $a$ always precedes the second-level group of $a$; otherwise, we can simply swap them without changing the magnitude of the total cost.

In this sequence, each time when the first-level group of $a$ is not followed by the second-level group of $a$ immediately, a cost of at least $k'$ is incurred, since the red pebbles from the first-level group cannot be deleted yet, but they are needed elsewhere for the common nodes of the next group in the sequence, so they all have to be turned blue. Hence, the minimal possible number of non-consecutively appearing group-pairs (multiplied by $2k'$) is a lower bound on the cost of any pebbling. If $a$ and $b$ are neighbors in $G$, then at most one of them can have its two groups consecutively; thus, the set of nodes for which the groups are consecutive forms an independent set. If we find a large independent set in $G$, then the complement of this set is a small vertex cover $VC$ of $G$. Thus if $VC_0$ is the smallest vertex cover, then $2k' \cdot |VC_0|$ is a lower bound on the cost of any pebbling strategy in our DAG. On the other hand, as discussed, a pebbling of cost $2k' \cdot |VC_0| + O(N^2)$ is indeed always possible.

Thus assume we have a $\delta$-approximation algorithm (with $\delta < 2$) for the pebbling problem, and we run it on our construction. Since running time is dominated by the $|VC| \cdot k'$ term discussed above, the remaining terms in $O(N^2)$ become irrelevant asymptotically; that is, for $N$ large enough, a $\delta$-approximation for every DAG is only possible if the approximation algorithm is always able to find a pebbling where the first- and second-level groups only appear non-consecutively for at most $\delta \cdot |VC_0|$ nodes of $G$. Since the remaining nodes form an independent set, this implies that the set of non-consecutive input groups provides a vertex cover of size at most $\delta \cdot |VC_0|$ in $G$. Since the DAG construction is polynomial in $N$, this implies that a polynomial time $\delta$-approximation algorithm for the pebbling problem would also provide a polynomial time $\delta$-approximation for Vertex Cover, contradicting the unique games conjecture. Thus if the conjecture holds, then such an algorithm cannot exist.

Finally, note that our reduction requires a relatively large number of red pebbles: as $k'=\omega(N^2)$, $k=k'+N$ and $R=k+1$, the pebbling problem obtained has an input parameter $R=\omega(N^2)$. However, this $R$ is only large in terms of $N$, the size of the original graph in the Vertex Cover problem. On the other hand, $R$ can be significantly smaller than the size of our DAG: we are free to add a large number of further nodes to the DAG that incur no cost and have no effect on the pebbling of the described construction. As the DAG size only needs to remain polynomial in $N$, this shows that for any integer $c$, we can have a reduction to a pebbling problem with $R=O(n^{\frac{1}{c}})$, with $n$ being the size of the DAG. Thus, our reduction also works if $R$ in the pebbling problem is restricted to smaller values.

\subsection{The greedy method in other models} \label{App:A:Gr}

In fact, defining a greedy algorithm in the remaining models (besides \textsc{oneshot}) is already not straightforward. With recomputations allowed, even if we choose the next node to compute, we need to decide whether to obtain the inputs from blue pebbles or through recomputation (in fact, even earlier: when we move a red pebble away from a node, we have to decide whether to turn it to blue or delete it).

However, we can adapt the construction to these models if we interpret the greedy approach as an ordering of the very first computation of nodes. Similarly to before, we can add a H2C gadget to ensure that recomputations are not beneficial in any case. In this interpretation, our results hold for any pebbling that uses a greedy heuristic to process the DAG, even if it is a `clever greedy' algorithm in the sense that for each new computation, it knows (from an oracle) the cheapest method to compute the specific node. In this sense, our results also hold in the other models, even for such moderately intelligent greedy heuristics.

Let us now consider the \textsc{nodel} and \textsc{compcost} models. In these models, the pebbling of the graph has an inherent cost of $\Omega(n)$, and hence we need a different choice of parameters. Let $k$ be a large constant value, $k'$ be a slightly smaller constant, and $\ell=\Theta(\sqrt{n})$. Recall that the number of nodes in our greedy construction is in the magnitude of $\ell^2 \cdot k$.

This implies that the common nodes of groups now only amount to $k' \cdot \ell = O(\sqrt{n})$ nodes in the graphs, so most nodes of the graph are the extra nodes in the $\binom{\ell+1}{2}$ groups. These can either be target nodes of the input group below, or nodes inserted into uppermost/lowermost groups of a column to misguide the greedy heuristic, or further filler nodes to ensure that each input group has the same size. Note that the number of misguiding nodes is only $O(1)$ in each column, and by merging the filler nodes in the different groups of each column, we can also ensure that we only use $O(1)$ filler nodes in each column. This reduces both the number of filler nodes and misguiding nodes to $O(1) \cdot \ell = O(\sqrt{n})$, so most nodes of the graph are indeed target nodes.

This already limits the number of source nodes in the graph to $O(\sqrt{n})$, allowing us to add a H2C gadget to the construction without any significant effect on the total number of nodes. Since almost all nodes of the graph ($n-O(\sqrt{n})$ of them) are now non-source nodes, the minimal cost of a pebbling in the \textsc{compcost} model is indeed in the magnitude of $\epsilon \cdot n$. Also, since the number of red pebbles $k+1$ is a constant, the minimal pebbling cost is $n-O(1) \approx n$ in the \textsc{nodel} model. This shows that enabling the source nodes through the H2C gadget is not a dominant part of the pebbling cost.

Recall that in our construction, the greedy algorithm has a cost in the magnitude of $k \cdot \ell^2$, while the optimum cost is in the magnitude of $(k-k') \cdot \ell^2$. Choosing a large constant $k$ and a $k'$ that is only slightly smaller,  we can create an arbitrary constant factor difference between the optimum and greedy pebblings. As the cost of any two pebblings are within a constant factor, this amounts to a significant difference in these models. In particular in the \textsc{nodel} model, this allows us to reach a factor $2 \Delta - c$ difference for a small constant $c$, essentially reaching the maximal difference factor of $2 \Delta + 1$ for larger $\Delta$ values.

Finally, let us consider the \textsc{base} model. We also add a H2C gadget here to ensure that source nodes (and most importantly among these, common nodes) are never recomputed. This adds an extra cost of $O(1) \cdot k \cdot \ell$, since these nodes initially have to be computed at some point. With this, the cost of greedy pebbling is now in the magnitude of $\ell \cdot k + \ell^2 \cdot k$, thus it is still dominated by the progress through the grid. The cost of the optimum is in the magnitude of $\ell \cdot k + \ell^2 \cdot (k-k')$. Hence by choosing $\ell = \Theta(\sqrt[3]{n})$, $k=\Theta(\sqrt[3]{n})$ and $k'=k-O(1)$, we get a cost of $\Theta(n)$ for the greedy pebbling and $\Theta(n^{\frac{2}{3}})$ for the optimum. Thus in the \textsc{base} model, there can be a factor of $\Theta(\sqrt[3]{n})$ difference between the two algorithms.

\section{Restriction to constant indegree} \label{App:B}

The main idea for the constant degree gadget (CD gadget) has already been outlined before. If $R-1$ red pebbles are placed on the leftmost nodes, then using two more pebbles, the entire gadget can be pebbled for free (in the \textsc{oneshot} and \textsc{base} models), regardless of the value of $h$. However, if we want to pebble the gadget using fewer than $R-1$ red pebbles on the leftmost nodes, then the red pebbles have to be moved around within the leftmost nodes, incurring a cost of at least 2 for every layer (assuming that the leftmost nodes cannot be recomputed for free), and hence a total cost of at least $2h$. For $h$ large enough, this already ensures that any reasonable pebbling has to place a red pebble on all the leftmost nodes at some point.

Note that this modification also requires us to raise the number of available red pebbles from $R$ to $R+1$. When adapting our constructions to constant indegree, we transform each input group into a CD gadget, and thus the construction will behave as before with $R+1$ available red pebbles.

Note that each input group is used to compute some specific target nodes in our constructions. After replacing the input groups by CD gadgets, we can simply add these target nodes to the end of the CD gadget, drawing an edge from the last node in the last layer of the CD gadget to all the target nodes. This ensures that in any pebbling with reasonable cost, the target nodes can only be enabled after $R-1$ red pebbles have been placed in the leftmost nodes of the CD gadget.

\subsection{NP-hardness}

The constant degree gadgets are rather straightforward to use in the Hamiltonian Path reduction. Recall that the maximal allowed cost $C$ in each of the models is within $O(N^2)$. Hence a choice of $h=C+1$ (or much higher) already ensures that if any pebbling strategy does not have all red pebbled at an input group at some point in time, then computing the constant degree gadget has a cost of at least $C+1$, and thus it is invalid. Therefore, any valid pebbling still has to visit all the input groups (i.e., have all red pebbles in the input group and its targets at some point), and a pebbling again comes down to the order of visiting the groups. Note that the number of nodes in the DAG still remains within $O(N^2) \cdot h = O(N^4)$.

Note that to have constant indegree for all nodes of the graph, we also have to transform the H2C gadget which is added in some of the models. This happens exactly the same way as with other input groups: we add 3 CD gadgets for every source node, with the leftmost nodes of the CD gadget being group $B$ of the H2C gadget, and the target node being one of the 3 starter nodes. Hence the computation of any starter node (in reasonable cost) also requires us to place all red pebbles on group $B$ of the H2C gadget.

Note that this is already sufficient for the \textsc{oneshot} and \textsc{base} models, where, after all red pebbles are moved to a CD gadget, the computation of the $h$ layers of the gadget is completely free. The other two models also require some modification in the allowed cost $C$. In \textsc{nodel}, we have to turn each node of each layer blue in every CD gadget, which raises the total cost by $(R-1) \cdot h$ for each added CD gadget. In \textsc{compcost}, while the intermediate nodes can be deleted for free again, the computation of the layers in each CD gadget amounts to an extra cost of $(R-1) \cdot h \cdot \epsilon$.

\subsection{Inapproximability}

In the reduction from Vertex Cover, we also simply have to select $h$ large enough in order to use CD gadgets. Since the optimum cost is always at most $O(N \cdot k)$, choosing $h=\Theta(N^2 \cdot k^2)$ ensures that any node not placing all red pebbles in an input group simultaneously (and thus incurring a cost of at least $h$) is not a constant-factor approximation.

Note that we only consider this construction in the \textsc{oneshot} model, where processing a CD gadget is free; therefore, replacing input groups by CD gadgets does not modify the cost of any (reasonable) pebbling strategy.

\subsection{Construction for the greedy algorithm}

The method, however, requires some modifications to be applicable to our results on the greedy algorithm.

First of all, note that in the greedy construction, the transformation is not even required for the case of the \textsc{nodel} and \textsc{compcost} models. In these models, we have selected $k=O(1)$, which already ensures that every target node in the construction has constant indegree.

Let us now consider the construction in the other two models. Note that previously, since each computation required the use of all red pebbles, it did not matter how the greedy algorithm selects the red pebbles to use for a specific computation. This becomes a more significant question in our construction for $\Delta=O(1)$. However, using CD gadgets with a high $h$ value essentially ensures that we always study the most clever greedy algorithm that select red pebbles in an optimal way. Such a clever greedy algorithm always removes the red pebbles from the already processed input groups into the input group that is currently being processed; this way, computing all nodes of the CD gadget happens at no cost. In any other case, the computation requires a cost of at least $h$, so if we select $h$ to be larger than the worst-case cost of the clever algorithm, then any suboptimal choice of red pebbles will certainly result in a higher total cost than the greedy algorithm that selects red pebbles optimally. Hence in this case, it is enough to compare the optimum pebbling to the greedy algorithm with the most intelligent choice of red pebbles.

For this construction, we slightly modify the original CD gadget. In the leftmost row of Figure \ref{fig:constdeg}, let us replace each node by a small group of constant size; the indegree of each node in the gadget will still remain a constant after this. This modification allows to take the $O(1)$ extra nodes (i.e., the ones besides the common nodes) in each input group, and add it to each small leftmost group in the gadget. Thus the only difference between the leftmost groups of the gadget will essentially be that each contains a separate common node of the specific input group.

The addition of a few more nodes to each leftmost group allows us to analyze the greedy algorithm more easily. Let us take different constants $c_1 > c_2 > c_3$. For each input group, let us create $c_1$ specific nodes that are included in each leftmost group of the CD gadget obtained from the input group. For each column of the grid, let us add $c_2$ nodes that are included in each input group in the given column of the grid. Finally, let us use $c_1$ nodes to misguide the greedy heuristic such that it prefers the bottom group of the next column after the top group of the previous column. This ensures that whenever the greedy algorithm starts processing an input group, it will prefer to continue with the next leftmost group of the gadget until all the leftmost groups are filled with red pebbles. After the entire CD gadget is pebbled, it will prefer to move to the input group above (in the grid), or the next column if it is already at the top of the current column. This ensures that the greedy algorithm follows the pattern discussed in Section \ref{sec:greedy} through the grid.

The extra H2C gadgets in the \textsc{base} model can be transformed as we have seen in the previous constructions.

Let us now discuss the choice of parameters required. In the \textsc{oneshot} model, we can choose $\ell$ to be a slowly growing function in $\omega(1)$ as before. In this case, the cost of the greedy algorithm can go up to as much as $k'$. Now if we select $k'=\widetilde{\Theta}(\sqrt{n})$, then a choice of $h=\widetilde{\Theta}(\sqrt{n})$ ensures that trying to pebble an input group without using all the red pebbles is already more costly than the total cost of the greedy method. Since the number of nodes in the graph is in the magnitude of $\ell^2 \cdot k \cdot h$, this is indeed a valid choice of parameters. As the optimum pebbling has a cost of $O((k-k') \cdot \ell^2)$, this shows that the greedy algorithm can be a factor $\widetilde{\Theta}(\sqrt{n})$ worse than the optimum.

In the \textsc{base} model, with $k'=k-O(1)$ as before, the cost of the greedy algorithm is essentially $\ell^2 \cdot k$, while the cost of the optimum pebbling is $\ell^2+\ell \cdot k$ (see Appendix \ref{App:A:Gr}). If we choose $h = \Theta(\ell^2 \cdot k)$, we can ensure that not using all red pebbles for an input group is always worse than the greedy cost of $\ell^2 \cdot k$. Since we have $\Theta(\ell^2 \cdot k \cdot h)$ nodes in the graph, we can choose $h=\Theta(\sqrt{n})$, $\ell = \Theta(\sqrt[6]{n})$ and $k = \Theta(\sqrt[6]{n})$. This results in a cost of $\Theta(\sqrt{n})$ for the greedy algorithm and a cost of $\Theta(\sqrt[3]{n})$ for the optimal pebbling, showing that the difference can go up to a factor of $\Theta(\sqrt[6]{n})$.

\section{Pebbling with different starting or finishing states} \label{App:C}

As mentioned in Section \ref{sec:basics}, some papers on the topic consider slightly different definitions for the initial and/or finishing state of pebblings. In particular, the original paper of Hong \textit{et al} \cite{RBintro} introducing the problem also uses these alternative definitions. We now show that for all our results, the settings are essentially equivalent, and applying the alternative definitions only result in an asymptotically insignificant difference in cost. 

In our definitions, we consider source nodes as regular nodes with 0 inputs, which are, thus, always computable for free. In contrast to this, some studies assume that source nodes already contain a blue pebble in the beginning, and are not computable at all; you have to turn these blue pebble to a red one in order to start computation. Note that we can also adapt our results to such a setting: as described in Section \ref{sec:basics} when analyzing the number of source nodes, we can add a single source $s_0$ to the DAG and make this the input of every other node. In this case, $s_0$ becomes the only source node of the graph, so the (possibly many) original sources can be computed for free as before. Turning $s_0$ to red in the beginning will induce an extra cost of 1, but this does not affect any of our results. This transformation into a single-source DAG already been described in \cite{RBcomplex} before.

Note that together with the observations of Section \ref{sec:basics}, this shows that both definitions of initialization (freely computable sources, or sources starting with blue pebbles) can model both possible situations of interest (when source nodes are free to compute, and when source nodes should incur some cost). If source nodes are free to compute, we can still make them incur some cost using a H2C gadget (as in Section \ref{sec:basics}); and if source nodes start with blue pebbles, we can still reduce the incurred cost to 1 using a single source $s_0$ (as shown above).

Also, while in our definition, we consider a pebbling finished when all sink nodes contain a pebble (either red or blue), some papers explicitly require each sink node to contain a blue pebble in the end of the pebbling. Note, however, that once each sink node has a pebble of some color, we can turn all these pebbles blue at an extra cost of at most 1 per sink node. One can observe that in each of our constructions, the number of sink nodes is always asymptotically smaller than the cost of the optimal pebbling, and thus this extra cost of 1 per sink node has no effect on the magnitude of the total cost. Hence, our results also hold if all sinks are required to have a blue pebble in the end.

More specifically, the construction of Section \ref{sec:tradeoff} only has 1 sink node, and an extra cost of 1 does not have any major effect on the behavior of the tradeoff. For the Hamiltonian Path reduction, any valid pebbling has to turn all but 1 sink node blue in any case; thus the different setting also only increases the cost by 1 (thus we also have to increment the maximal allowed cost by 1). In the Vertex Cover reduction, the only sink nodes are the target nodes of second-level groups, thus the DAG contains $N$ sink nodes. Since we have specifically chosen $k$ much larger than $N$, unless $G$ contains a vertex cover of size 0, an increase of $N$ in the cost has no effect on the reductions. Finally, the construction for the greedy algorithm has $\ell$ sink nodes, and a cost increase of $\ell$ has no effect asymptotically in any of the models.

\end{document}